\newcommand{\pls}{\mathsf{size}\text{-}\mathsf{pls}}
\newcommand{\size}{\mathsf{size}}
\DeclareMathOperator{\poly}{poly}
\DeclareMathOperator{\dist}{\textrm{dist}}
\DeclareMathOperator{\id}{\textsc{ID}}
\newcommand{\LOCAL}{\textsc{local}}
\newcommand{\CONGEST}{\textsc{congest}}
\newcommand{\radius}{\mathsf{radius}}
\newcommand{\diam}{{\sc diam}}
\newcommand{\mdiam}{\mbox{\diam}}
\newcommand{\ST}{\textsc{st}}
\newcommand{\MST}{\textsc{mst}}
\newcommand{\POS}{\mbox{pos}}
\newcommand{\disj}{{\sc disj}}
\newcommand{\set}[1]{\left\{ #1 \right\}}
\DeclarePairedDelimiter\ceil{\lceil}{\rceil}
\newcommand{\cF}{\mathcal{F}}
\newcommand{\ID}{\mbox{ID}}
\newtheorem{problem}{Open Problem}
\declaretheorem[name=Claim]{claim}
\newcommand{\remove}[1]{}
\title{Redundancy in Distributed Proofs}
\author[1]{Laurent Feuilloley}
\affil[1]{Universidad de Chile, Chile. {\tt feuilloley@dii.uchile.cl}}
\author[2]{Pierre Fraigniaud}
\affil[2]{IRIF, CNRS and Université de Paris, France. {\tt pierref@irif.fr}}
\author[3]{Juho Hirvonen}
\affil[3]{Aalto University, Finland. {\tt juho.hirvonen@aalto.fi}}
\author[4]{Ami Paz}
\affil[4]{Faculty of Computer Science, University of Vienna, Austria. {\tt ami.paz@univie.ac.at}}
\author[5]{Mor Perry}
\affil[5]{Weizmann Institute of Science, Israel. {\tt mor.perry@weizmann.ac.il}}
\authorrunning{Laurent Feuilloley, Pierre Fraigniaud, Juho Hirvonen, Ami Paz, and Mor Perry}
\subjclass{Networks---Error detection and error correction;
		Theory of computation---Distributed computing models;
		Computer systems organization---Redundancy}
\keywords{Distributed verification, Distributed graph algorithms, Proof-labeling schemes, Space-time tradeoffs, Nondeterminism}
\begin{document}

\maketitle

\begin{abstract}
Distributed proofs are mechanisms that enable the nodes of a network to collectively and efficiently check the correctness of Boolean predicates
on the structure of the network (e.g., having a specific diameter),
or on objects distributed over the nodes (e.g., a spanning tree).
We consider well known mechanisms consisting of two components:
a \emph{prover} that assigns a \emph{certificate} to each node,
and a distributed algorithm called a \emph{verifier} that is in charge of verifying the distributed proof formed by the collection of all certificates.
We show that many network predicates have distributed proofs offering a high level of redundancy, explicitly or implicitly.
We use this remarkable property of distributed proofs to establish perfect tradeoffs between the \emph{size of the certificate} stored at every node, and the \emph{number of rounds} of the verification protocol.
\end{abstract}

\newpage

\section{Introduction}

\subsection{Context and Objective}

In the context of distributed fault-tolerant computing in large scale networks, it is of the utmost importance that the computing nodes can perpetually check the correctness of distributed objects (e.g., spanning trees) encoded distributively over the network. Such objects can be the outcome of an algorithm that might be subject to failures, or be a-priori correctly given objects but subject to later corruption. 
There are several mechanisms for checking the correctness of distributed objects (see, e.g., \cite{afek1997,AfekD02,AwerbuchPV91,BeauquierDDT07,BlinF15,BlinFP14}), and here we focus on one classical mechanism which is both simple and versatile, known as \emph{proof-labeling schemes}~\cite{KormanKP10}, or as~\emph{locally checkable proofs}~\cite{GoosS16}\footnote{These two mechanisms slightly differ: the latter assumes that every node can have access to the whole state of each of its neighbors, while the former assumes that only part of this state (the certificate that is assigned for the verification process) is visible from neighboring nodes; nevertheless, the two mechanisms share the same essential features.}.

Roughly, a proof-labeling scheme assigns \emph{certificates} to each node of the network. These certificates can be viewed as forming a distributed proof of the actual object.
The nodes are then in charge of collectively verifying the correctness of this proof. The requirements are in a way similar to those imposed on non-deterministic algorithms (e.g., the class \textsf{NP}), namely: (1)~on correct structures, the assigned certificates must be accepted, in the sense that every node must accept its given certificate; (2)~on corrupted structures, whatever certificates are given to the nodes, they must be rejected, in the sense that at least one node must reject its given certificate. (The rejecting node(s) can raise an alarm, or launch a recovery procedure.) 
For example, a spanning-tree can be verified in one communication round by the following process. Root the tree and give every node a certificate which is the identity of the root and its distance from the root. 
The nodes use the distances to verify that the structure is acyclic, by making sure each non-root node has a neighbor with smaller distance;
they use the root-$\ID$ to verify connectivity, by making sure they all have the same root-$\ID$.
If the structure is connected and acyclic, it is a spanning tree.
Proof-labeling schemes and locally checkable proofs can be viewed as a form of non-deterministic distributed computing (see also~\cite{FraigniaudKP13}).

The main measure of quality for a proof-labeling scheme is the \emph{size} of the certificates assigned to \emph{correct} (a.k.a.~\emph{legal}) objects. These certificates are verified using protocols that exchange certificates between neighboring nodes, so using large certificates may result in significant overheads in term of communication. In addition, proof-labeling schemes can be combined with other  mechanisms enforcing fault-tolerance, including replication (multiple copies of the memory of each node). Large certificates may prevent replication, or at the least result in significant overheads in term of space complexity if using replication.

Proof-labeling schemes are extremely versatile, in the sense that they can be used to certify \emph{any} distributed object or graph property. For instance, to certify a spanning tree, there are several proof-labeling schemes, each using certificates of logarithmic size~\cite{ItkisL94,KormanKP10}. 
Similarly, certifying a minimum-weight spanning tree (MST) can be achieved with certificates of size $\Theta(\log^2n)$ bits in $n$-node networks~\cite{KormanKP10, KormanK07}. Moreover, proof-labeling schemes are very \emph{local}, in the sense that the verification procedure performs in just one round of communication, with each node accepting or rejecting based solely on its certificate and the certificates of its neighbors. However, this versatility and locality comes with a cost. 
For instance, certifying rather simple graph property, such as certifying that each node holds  the value of the diameter of the network, requires certificates of $\widetilde{\Omega}(n)$ bits~\cite{Censor-HillelPP18}%
\footnote{
	The $O$-tilde notation $\widetilde{O}(f(n))$ is similar to the big-$O$ notation $O(f(n))$, but ignores factors which are poly-logarithmic in~$f(n)$.
}.  There are properties that require even larger certificates. For instance, certifying that the network is not 3-colorable, or certifying that the network has a non-trivial automorphism both require certificates of $\widetilde{\Omega}(n^2)$ bits~\cite{GoosS16}. The good news though is that all distributed objects (and graph properties) can be certified using certificates of $O(n^2+kn)$ bits, which is the number of bits needed to store the entire $n$-node graph with $k$-bit node labels --- see~\cite{GoosS16,KormanKP10}.

Several attempts have been made to make proof-labeling schemes more efficient.
For instance, it was shown in~\cite{FraigniaudPP19} that randomization helps a lot in terms of \emph{communication} costs, typically by hashing the certificates,
but this might actually come at the price of dramatically increasing the certificate size.
Sophisticated deterministic and efficient solutions have also been provided for reducing the size of the certificates, but they are targeting specific structures only, such as MST~\cite{KormanKM15}.
Another direction for reducing the size of the certificates consists of relaxing the decision mechanism to be some function of the local outputs (not necessarily a conjunction), and allowing each node to output more than just a single bit (accept or reject)~\cite{ArfaouiFIM,ArfaouiFP13}. 
For instance, certifying cycle-freeness simply requires certificates of $O(1)$ bits with just 2-bit output, while certifying cycle-freeness requires certificates of $\Omega(\log n)$ bits with 1-bit output~\cite{KormanKP10}. However, this relaxation assumes the existence of a centralized entity gathering the outputs from the nodes, and there are still network predicates that require certificates of $\widetilde{\Omega}(n^2)$ bits even under this relaxation.
Another notable approach is using approximation~\cite{Censor-HillelPP18},
which reduces, e.g., the certificate size for certifying the diameter of the graph
from $\Omega(n)$ down to $O(\log n)$, but at the cost of only determining if the given value is up to two times the real diameter.

In this paper, we aim at designing deterministic and generic ways for reducing the certificate size of proof-labeling schemes. This is achieved by following the guidelines of \cite{OstrovskyPR17}, that is, trading time for space by exploiting the  inherent redundancy in distributed proofs.
We focus only on questions regarding the \emph{existence} of proof-labeling schemes, and give upper and lower bounds on their sizes. 
While some of our techniques, such as randomized proofs of existence of some efficient schemes, seem to point to specific ways of implementation, we leave the design for efficient algorithms for constructing the labels as an intriguing open question.

\subsection{Our Results}

As mentioned above, proof-labeling schemes include a verification procedure consisting of a single round of communication. In a nutshell, we prove that using more rounds of communication for verifying the distributed proof enables one to reduce significantly the size of the certificates, often by a  factor super-linear in the number of rounds, and sometimes even exponential.

More specifically, a proof-labeling scheme of radius~$t$ (where $t$ can depend on the size of the input graph) is a proof-labeling scheme where the verification procedure performs $t$ rounds, instead of just one round as in classical proof-labeling schemes. We may expect that proof-labeling schemes of radius~$t$ should help reduce the size of the certificates.
This expectation is based on the intuition that the verification of classical (radius-1) proof-labeling schemes is done by comparing certificates of neighboring nodes or computing some function of them,
and accepting only if they are consistent with one another (in a sense that depends on the scheme).
If the certificates are poorly correlated, then allowing more rounds for the verification should not be of much help as, with a $k$-bit certificate per node, the global proof has $kn$ bits in total in $n$-node graphs, leaving little freedom for reorganizing the assignment of these $kn$ bits to the $n$ nodes. 
Perhaps surprisingly, we show that distributed proofs do not only involve partially redundant certificates, but inherently \emph{highly redundant certificates}, which enables one to reduce their size significantly when more rounds are allowed.
To capture this phenomenon, we say that a proof-labeling scheme \emph{scales} with scaling factor $f(t)$ if its size can be reduced by a factor~$\Omega(f(t))$ when using a $t$-round verification procedure;
we say that the scheme \emph{weakly} scales with scaling factor~$f(t)$ if
the scaling factor is $\Omega(f(t)/\poly\log\, n)$ in $n$-node networks.

Our results can be split into three classes, as follows.

\subparagraph{All schemes, specific graphs.}
In Section~\ref{sec:scaling-on-trees}, we prove that, in trees and other graph classes including e.g. grids, \emph{all} proof-labeling schemes scale, with scaling factor~$t$ for $t$-round verification procedures. In other words, for every Boolean predicate $\mathcal{P}$ on labeled trees (that is, trees whose every node is assigned a label, i.e., a binary string), if $\mathcal{P}$ has a proof-labeling scheme with certificates of $k$ bits, for some $k\geq 0$, then $\mathcal{P}$ has a proof-labeling scheme of radius~$t$ with certificates of $O(k/t)$ bits, for all $t\geq 1$. See Theorem~\ref{thm:tree-scaling} and Corolary~\ref{cor:cycle-and-grid-scaling}.

\subparagraph{Specific families of schemes, all graphs.}
While the above results are restricted to specific families of graphs, in Section~\ref{sec:optimal-uniform} we study general graphs, restricting our attention to specific families of proof-labeling schemes.
We prove that, in any graph, uniform parts of proof-labeling schemes weakly scale in an growth-dependent manner, which could be much faster than linear. More precisely, for every Boolean predicate $\mathcal{P}$ on labeled graphs, if $\mathcal{P}$ has a proof-labeling scheme such that $k$ bits are identical in all certificates, then the part with these $k$ bits weakly scales in proportion to the number of nodes in a ball in the graph: it can be reduced into roughly $\widetilde O(k/b(t))$ bits by using a proof-labeling scheme of radius~$t$, where $b(t)$ denotes the size of the smallest ball of radius $t$ in the actual graph. Therefore, in graphs whose neighborhoods increase polynomially, or even exponentially with their radius, the benefit in terms of space-complexity of using a proof-labeling scheme with radius~$t$ can be enormous. This result is of particular interest for the so-called \emph{universal} proof-labeling scheme, in which every node is given the full $n^2$-bit adjacency matrix of the graph as part of its certificate, along with the $O(\log n)$-bit index of that node in the matrix. See Theorem~\ref{theo:universal} and Corollary~\ref{cor:universal}.

\subparagraph{Specific predicates, all graphs.}
We complement these results of scaling for general predicates with results regarding scaling for some concrete predicates.
We address classical Boolean predicates on labeled graphs, including spanning tree, minimum-weight spanning tree, diameter, and additive spanners. 
In Section~\ref{sec:distance-related}, we show that the certificate sizes of proof-labeling schemes for diameter and spanners predicates weakly scale linearly on general graphs, and cannot be scaled better. See Theorems~\ref{theo:scaling for diameter} and~\ref{thm: spanners scaling}.
In Section~\ref{sec: spanning trees}, we study the classical predicates of spanning tree and minimum spanning tree --- for both, we present proof-labeling schemes that scale linearly, and in the minimum spanning tree case, this bridges the gap between two known schemes: a distance-1 $O(\log^2n)$-bit scheme~\cite{KormanKM15}, and a distance-$O(\log n)$ $O(\log n)$-bit scheme~\cite{KormanKM15,KormanK07}. See Theorem~\ref{thm:ST} and~\ref{thm:MST}.

\subsection{Our Techniques}

Our proof-labeling schemes demonstrate that if we allow $t$ rounds of verification,
it is enough to keep only a small portion of the certificates,
while all the rest are redundant.
In a path, it is enough to keep only two consecutive certificates out of every $t$:
two nodes with $t-2$ missing certificates between them can
try all the possible assignments for the missing certificates,
and accept only if such an assignment exists.
This reduces the \emph{average} certificate size;
to reduce the \emph{maximal} size,
we split the remaining certificates equally among the certificate-less nodes.
This idea is extended to trees and grids,
and is at the heart of the proof-labeling schemes presented in Section~\ref{sec:scaling-on-trees}.

On general graphs, we cannot omit certificates from some nodes and let the others check all the options for missing certificates in a similar manner.
This is because, for our approach to apply, the parts of missing certificates must be isolated
by nodes with certificates. That is, when removing the nodes with certificates from the graph, what remains should consists only of small connected components.
However, if all the certificates are essentially the same,
as in the case of the universal scheme,
we can simply keep each part of the certificate at some random node\footnote{All our proof-labeling schemes are deterministic, but we use the probabilistic method for proving the existence of some of them.},
making sure that each node has all parts in its radius-$t$ neighborhood.
A similar, yet more involved idea,
applies when the certificates are distances,
e.g., when the predicate to check is the diameter,
and the certificate of a node contains in a radius-$1$ proof-labeling scheme its distances to all other nodes.
While the certificates are not universal in this latter case,
we show that it still suffices to randomly keep parts of the distances,
such that on 
a fixed shortest path between every two nodes,
the distance between two certificates kept is at most $t$.
These ideas are applied in Sections~\ref{sec:optimal-uniform} and~\ref{sec:distance-related}.

In order to prove lower bounds on the certificate size of proof-labeling schemes
and on their scaling,
we combine several known techniques in an innovative way.
A classic lower bound technique for proof-labeling schemes is called \emph{crossing},
but this cannot be used for lower bounds higher than logarithmic,
and is not suitable for our model.
A more powerful technique is the use of nondeterministic communication complexity~\cite{GoosS16,Censor-HillelPP18},
which extends the technique used for the \CONGEST{} model~\cite{FrischknechtHW12,AbboudCHK16}.
In these bounds, the  nodes are partitioned between two players,
who simulate the verification procedure in order to solve a communication complexity problem,
and communicate whenever a message is sent over the edges of the cut between their nodes.
When proving lower bounds for proof-labeling schemes,
the nondeterminism is used to define the certificates: a nondeterministic string for a communication complexity problem
can be understood as a certificate,
and, when the players simulate verification on a graph,
they interpret their nondeterministic strings as node certificates.

However, even this technique does not seem to be powerful enough
to prove lower bounds for our model, due to the multiple rounds verification.
When splitting the nodes between the two players,
the first round of verification only depends on the certificates of the nodes touching the cut,
but arguing about the other verification rounds seems much harder.
To overcome this problem,
we use a different style of simulation argument,
where the node partition is not fixed but evolves over time~\cite{PelegR99,Elkin04,DasSarma+12}.
More specifically, while there are sets of nodes which are simulated explicitly by either of the two players during the $t$~rounds,
the nodes in the paths connecting these sets are simulated in a decremental manner:
both players start by simulating all these nodes, and then
simulate less and less nodes as time passes.
After the players communicate the certificates of the nodes along the paths at the beginning,
they can simulate the verification process without any further communication.
In this way,
we are able to adapt some techniques used for the \CONGEST{} model (that is a model where one has to cope with bandwidth restrictions) to our model,
even though proof-labeling schemes are a computing model that is much more similar to the \LOCAL{} model~\cite{Peleg00} (where there is no bandwidth restriction).

\subsection{Previous Work}

The mechanism considered in this paper for certifying distributed objects and predicates on labeled graphs has at least three variants. The original \emph{proof-labeling schemes}, as defined in \cite{KormanKP10}, assume that nodes exchange solely their certificates between neighbors during the verification procedure. Instead, the variant called \emph{locally checkable proofs}~\cite{GoosS16} imposes no restrictions on the type of information that can be exchanged between neighbors during the verification procedure. In fact, they can exchange their full labels and local views of the graph, which makes the design of lower bounds far more complex. There is a third variant, called \emph{nondeterministic local decision}~\cite{FraigniaudKP13}, which prevents using the actual identities of the nodes in the certificates. That is, the certificate must be oblivious to the assignment of identifiers to the nodes. This latter mechanism is weaker than proof-labeling schemes and locally checkable proofs, as there are graph predicates that cannot be certified in this manner. 
For example, the certificates can not indicate a unique leader without using the identities. Note that if the certificates contain some artificial assignment of identities, the nodes must verify their uniqueness, which in turn, is impossible without using the unique node identifiers.
However, all  predicates on labeled graphs can be certified by allowing randomization~\cite{FraigniaudKP13}, or by allowing just one alternation of quantifiers (the analog of $\Pi_2$ in the polynomial hierarchy)~\cite{BalliuDFO17}. A recent line of work studies a distributed variant of interactive proofs~\cite{kol18interactive,NaorPY20,CrescenziFP19}.

Our work focuses on the second model, of locally checkable proofs. However, for consistency with previous literature on scaling~\cite{OstrovskyPR17}, we stick to the name proof-labeling schemes.
Note that the main difference is the use of node identifiers, and when the certificates are of size $\Omega(\log n)$, the prover can add the idbandwidth restrictionentifiers to the certificates, making the models equivalent.

Our work was inspired by~\cite{OstrovskyPR17}, which aims at reducing the size of the certificates by trading time for space, i.e., allowing the verification procedure to take $t$~rounds, for a non-constant $t$, in order to reduce the certificate size. They show a tradeoff of this kind for example for proving the acyclicity of the input graph.
The results in~\cite{KormanKM15} were another source of inspiration, as it is shown that, by allowing $O(\log^2n)$ rounds of communication, one can verify MST using certificates of $O(\log n)$ bits. In fact, \cite{KormanKM15} even describe an entire self-stabilizing algorithm for MST construction based on this mechanism for verifying MST.

In~\cite{FeuilloleyFH16}, the authors generalized the study of the class log-\textsf{LCP} introduced in~\cite{GoosS16}, consisting of network properties verifiable with certificates of $O(\log n)$ bits, to present a whole local hierarchy inspired by the polynomial hierarchy. For instance, it is shown that MST is at the second level of that hierarchy, and that there are network properties outside the hierarchy. In~\cite{Patt-ShamirP17}, the effect of sending different messages to different neighbors on the communication complexity of verification is analyzed. The impact of the number of errors on the ability to detect the illegality of a object w.r.t.\ a given predicate is studied in~\cite{FeuilloleyF17}. The notion of approximate proof-labeling schemes was investigated in~\cite{Censor-HillelPP18}, and the impact of randomization on communication complexity of verification has been studied in~\cite{FraigniaudPP19}.

Finally, verification mechanisms a la proof-labeling schemes were used in other contexts, including the congested clique~\cite{KorhonenS17}, wait-free computing~\cite{FraigniaudRT16}, failure detectors~\cite{FraigniaudRTKR16}, anonymous networks~\cite{FoersterLSW18}, and mobile computing~\cite{BampasI16,FraigniaudP17}.  For more references to work related to distributed verification, or distributed decision in general, see the survey~\cite{FeuilloleyF16}. To our knowledge, the aforementioned works~\cite{KormanKM15,OstrovskyPR17} are the only prior works studying tradeoffs between time and certificate size.

\section{Model and Notations}
\label{sec:model-and-notations}

A \emph{distributed network} is modeled as a graph composed of nodes connected by edges, where the nodes represent processors and each edge represents a communication link. Each node has a unique identity number (henceforth, $\ID$). The edges are undirected and the communication on each edge is bidirectional. The nodes communicate synchronously over the edges, i.e., a computation starts simultaneously in all nodes and proceeds in discrete rounds. In each round, each node sends messages to its neighbors, receives messages from its neighbors, and performs a local computation. Nodes are allowed to perform arbitrarily heavy local computations, and these are not taken into account in the complexity of the distributed computations in this model. Nevertheless, in this work, the running time of local computations is at most polynomial in the size of the input. The running time of a distributed algorithm is defined to be the number of communication rounds.

A labeled graph is a pair $(G,x)$ where $G=(V,E)$ is a connected simple graph, and $x:V\to \{0,1\}^*$ is a  function assigning a bit-string, called a \emph{label}, to every node of~$G$.
When discussing a weighted $n$-node graph $G$, we assume $G=(V,E,w)$, where $w:E\to [1,n^c]$ for a fixed $c\geq 1$, so $w(e)$ can be encoded on $O(\log n)$ bits.
An identity-assignment to a graph $G$ is an assignment $\id:V\to [1,n^c]$, for some fixed $c\geq 1$, of distinct identities to the nodes.

A distributed decision algorithm is an algorithm in which every node outputs accept or reject. We say that such an algorithm accepts if and only if every node outputs accept.

Given a finite collection $\mathcal{G}$ of labeled graphs, we consider a Boolean predicate $\mathcal{P}$ on every labeled graph in~$\mathcal{G}$ (which may even depend on the identities assigned to the nodes). 
For instance, $\mbox{\sc aut}$ is the predicate on graphs stating that there exists a non-trivial automorphism of the graph (Recall that an automorphism of a graph is permutation $\sigma$ of the nodes, such that for all pair of nodes $u,v$, $(u,v)$ is an edge if and only if $(\sigma(u), \sigma(v))$ is an edge, and that it is non-trivial if the permutation is not the identity.)
Similarly, for any weighted graph with identity-assignment $\id$, the predicate $\mbox{\sc mst}$ on $(G,x,\id)$ states whether $x(v)=\id(v')$ for some $v'\in N[v]$\footnote{In a graph, $N(v)$ denotes the set of neighbors of node~$v$, and $N[v]=N(v)\cup\{v\}$.} for every $v\in V(G)$, and whether the collection of edges $\{\{v,x(v)\}, v\in V(G)\}$ forms a minimum-weight spanning tree of $G$. 

\begin{definition}

A proof-labeling scheme for a predicate $\mathcal{P}$ is a pair $(\mathbf{p},\mathbf{v})$, where
\begin{itemize}
	\item $\mathbf{p}$, called \emph{prover}, is an oracle that assigns a bit-string called a \emph{certificate} to every node of every labeled graph $(G,x)\in \mathcal{G}$, potentially using the identities assigned to the nodes, and
	\item $\mathbf{v}$, called \emph{verifier}, is a distributed decision algorithm such that,
	for every $(G,x)\in \mathcal{G}$, and for every identity assignment $\id$ to the nodes of~$G$,
	\[
	\left\{\begin{array}{lcl}
	(G,x,\id) \; \mbox{satisfies} \; \mathcal{P} & \Longrightarrow &  \mathbf{v}\circ \mathbf{p}(G,x,\id) = \mbox{accept};\\
	(G,x,\id) \; \mbox{does not satisfy} \; \mathcal{P} & \Longrightarrow &  \mbox{for every prover $\mathbf{p'}$}, \; \mathbf{v}\circ \mathbf{p'}(G,x,\id) = \mbox{reject};
	\end{array}\right.
	\]
\end{itemize}
here, $\mathbf{v}\circ \mathbf{p}$ is the output of the verifier $\mathbf{v}$ on the certificates assigned to the nodes by $\mathbf{p}$.
That is, if $(G,x,\id)$ satisfies $\mathcal{P}$, then, with the certificates assigned to the nodes by the prover $\mathbf{p}$, the verifier accepts at all nodes. Instead, if $(G,x,\id)$ does not satisfy $\mathcal{P}$, then, whatever certificates are assigned to the nodes, the verifier rejects in at least one node.
We emphasize that in this work, the verifier is always deterministic.
\end{definition}

The \emph{radius} of a proof-labeling scheme $(\mathbf{p},\mathbf{v})$  is defined as  the maximum number of rounds of the verifier $\mathbf{v}$ in the \LOCAL{} model~\cite{Peleg00}, over all identity-assignments to all the instances in $\mathcal{G}$, and all arbitrary certificates. It is denoted by $\radius(\mathbf{p},\mathbf{v})$.
Often in this paper, the phrase proof-labeling scheme is abbreviated to PLS, while a proof-labeling scheme of radius~$t\geq 1$ is abbreviated to $t$-PLS. Note that, in a $t$-PLS, one can assume, w.l.o.g., that the verification procedure, which is given $t$ as input to every node, proceeds at each node  in two phases: (1)~collecting all the data (i.e., labels and certificates) from nodes at distance at most~$t$, including the structure of the ball of radius~$t$ around that node, and (2)~processing all the information for producing a verdict, either accept or reject.
Note that, while the examples in this paper are of highly uniform graphs, and thus the structure of the $t$-balls might be known to the nodes in advance, our scaling mechanisms work for arbitrary graphs.

Given an instance $(G,x,\id)$ satisfying $\mathcal{P}$, we denote by $\mathbf{p}(G,x,\id,v)$ the certificate assigned by the prover $\mathbf{p}$ to node $v\in V$, and by $|\mathbf{p}(G,x,\id,v)|$ its size. We also let $|\mathbf{p}(G,x,\id)|=\max_{v\in V(G)} |\mathbf{p}(G,x,\id,v)|$. The \emph{certificate-size} of a proof-labeling scheme $(\mathbf{p},\mathbf{v})$ for $\mathcal{P}$ in $\mathcal{G}$, denoted $\size(\mathbf{p},\mathbf{v})$, is defined as the maximum of $|\mathbf{p}(G,x,\id)|$, taken over all instances $(G,x,\id)$ satisfying~$\mathcal{P}$, where $(G,x)\in \mathcal{G}$.  
In the following, we focus on the graph families $\mathcal{G}_n$ of connected simple graphs with $n$ nodes, $n\geq 1$. That is, the size of a proof-labeling scheme is systematically expressed as a function of the number~$n$ of nodes. For the sake of simplifying the presentation, the graph family $\mathcal{G}_n$ is omitted from the notations.

The minimum certificate size of a $t$-PLS for the predicate $\mathcal{P}$ on $n$-node labeled graphs is denoted by $\pls(\mathcal{P},t)$, that is,
\[
\pls(\mathcal{P},t)= \min_{\radius(\mathbf{p},\mathbf{v}) \leq t} \size(\mathbf{p},\mathbf{v}).
\]
We also denote by  $\pls(\mathcal{P})$ the size of a standard (radius-1) proof-labeling scheme for $\mathcal{P}$, that is, $\pls(\mathcal{P}) =\pls(\mathcal{P},1)$. For instance, it is known that $\pls(\mbox{\sc mst})=\Theta(\log^2n)$ bits~\cite{KormanK07,KormanKP10}, and that $\pls(\mbox{\sc aut})=\widetilde{\Omega}(n^2)$ bits~\cite{GoosS16}. More generally, for every decidable predicate $\mathcal{P}$, we have $\pls(\mathcal{P})=O(n^2+nk)$ bits~\cite{GoosS16} whenever the input labels are of $k$~bits, and $\pls(\mathcal{P},D)=0$ for graphs of diameter~$D$ because the verifier can gather all labels, and all edges at every node in $D$ rounds.

\begin{definition}
	Let $\mathcal{I} \subseteq \mathbb{N}^+$, and let $f:\mathcal{I}\to \mathbb{N}^+$. Let~$\mathcal{P}$ be a Boolean predicate on labeled graphs.
	A set $(\mathbf{p}_t,\mathbf{v}_t)_{t\in \mathcal{I}}$ of proof-labeling schemes for~$\mathcal{P}$, with respective radius~$t\geq 1$,  \emph{scales} with scaling factor~$f$ on~$\mathcal{I}$ if $\size(\mathbf{p}_t,\mathbf{v}_t)= O\big(\frac{1}{f(t)}\cdot \pls(\mathcal{P})\big)$ bits for every $t\in\mathcal{I}$.
	The set $(\mathbf{p}_t,\mathbf{v}_t)_{t\in \mathcal{I}}$ \emph{weakly scales} with scaling factor~$f$ on $\mathcal{I}$ if 
	 $\size(\mathbf{p}_t,\mathbf{v}_t)= O\big(\frac{\poly\log n}{f(t)}\cdot \pls(\mathcal{P})\big)$ 
	
	bits for every $t\in\mathcal{I}$.
\end{definition}

In the following, somewhat abusing terminology, we shall say that a proof-labeling scheme (weakly) scales while, formally, it should be a set of proof-labeling schemes that scales.

\begin{remark}
At  first glance, it may seem that no proof-labeling scheme can scale more than linearly, i.e., one may be tempted to claim that for every predicate $\mathcal{P}$ we have $\pls(\mathcal{P},t)=\Omega\left(\frac{1}{t} \cdot \pls(\mathcal{P})\right)$.
The rationale for such a claim is that, given a proof-labeling scheme $(\mathbf{p}_t,\mathbf{v}_t)$ for~$\mathcal{P}$, with radius~$t$ and  $\pls(\mathcal{P},t)$, one can construct a proof-labeling scheme $(\mathbf{p},\mathbf{v})$ for $\mathcal{P}$ with radius~$1$ as follows: the certificate of every node~$v$ is the collection of certificates assigned by~$\mathbf{p}_t$ to the nodes in the ball of radius~$t$ centered at~$v$; the verifier~$\mathbf{v}$ then simulates the execution of $\mathbf{v}_t$ on these certificates.
In paths or cycles, the certificates resulting from this construction are of size $O(t\cdot \pls(\mathcal{P},t))$, from which it follows that no proof-labeling scheme can scale more than linearly.
There are several flaws in this reasoning, which actually make it erroneous.

First, it might be the case that degree-2 graphs are not the worst case graphs for the predicate $\mathcal{P}$; 
for instance, if the predicate $\mathcal{P}$ requires large certificates only on degree-3 graphs, then the same rational cannot rule out much better scaling, as it only leads to
$\pls(\mathcal{P},t)=\Omega\left(\frac{1}{2^t} \cdot \pls(\mathcal{P})\right)$.
Second, in $t$ rounds of verification
every node learns not only the certificates of its $t$-neighborhood,
but also its structure,
which may contain valuable information for the verification;
this idea stands out when the lower bounds for $\pls(\mathcal{P})$ (without scaling) are established using labeled graphs of constant diameter, in which case there is no room for studying how proof-labeling schemes can scale.

The take away message is that establishing lower bounds of the type $\pls(\mathcal{P},t)=\Omega(\frac{1}{t} \cdot \pls(\mathcal{P}))$ for $t$ within some non-trivial interval requires specific proofs, which often depend on the given predicate~$\mathcal{P}$.
\end{remark}

\paragraph*{Communication Complexity.}
We use results from the well studied field of communication complexity~\cite{KN,Yao79}.
In the set-disjointness (\disj) problem on $k$ bits, each of
two players, Alice and Bob, is given a $k$-bit string, denoted $S_A$ and $S_B$, respectively, as input.
They aim at deciding whether there does not exist $i\in\set{1,\ldots,k}$
such that $S_A[i]=S_B[i]=1$.
We associate sets with characteristic vectors, i.e., the set includes all indexes $i$ for which $S[i]=1$. 
A standard abuse of notation leads to the following alternative definition of this problem.
The goal in set-disjointness is to deciding whether $S_A\cap S_B=\emptyset$.
The communication complexity of a given protocol solving \disj{}
is the number of bits Alice and Bob must communicate, in the worst case,
when using this protocol.
The communication complexity of \disj{} is the minimum
communication complexity of a protocol solving it.

In \emph{nondeterministic communication complexity},
each of the players receives,
in addition to its input, a binary string as
a hint, which may depend on both input strings, and the following holds.
For every TRUE-instance, there exists a hint such that the players output `TRUE'; and for every FALSE-instance, for all hints, the players output `FALSE'.
For example, a good hint for $\overline{\mbox{\disj}}$,
i.e., deciding whether there exists $i\in\set{1,\ldots,k}$
such that $S_A[i]=S_B[i]=1$, is the index $i$ itself. Indeed,
if one of the two players receives $i$ as a hint,
he or she can send it to the other player,
and they both check that $S_A[i]=1$ and $S_B[i]=1$.

The communication complexity of a nondeterministic protocol for \disj{}
is the number of bits the players exchange on two input strings that are disjoint,
in the worst case,
when they are given optimal nondeterministic strings.
The nondeterministic communication complexity of \disj{} is the minimum,
among all nondeterministic protocols for \disj{},
of the communication complexity of that protocol.
The following theorem holds.
The nondeterministic communication complexity of \disj{} is known to be $\Omega(k)$,
as a consequence of, e.g., Example~1.23 and Definition~2.3 in~\cite{KN}.

\begin{theorem}\label{thm:non-det-disj}
The nondeterministic communication complexity of \disj{} is $\Omega(k)$.
\end{theorem}

\section{All Proof-Labeling Schemes Scale Linearly in Trees}\label{sec:scaling-on-trees}

This section is entirely dedicated to the proof of one of our main results, stating that \emph{every} predicate on labeled trees has a proof that scales linearly. Further in the section, we also show how to extend this result to cycles and to grids, and, more generally, to multi-dimensional grids and toruses.

\subsection{Linear scaling for trees}%
\begin{theorem}\label{thm:tree-scaling}
	Let  $\mathcal{P}$ be a predicate on labeled trees. Then  $\pls(\mathcal{P},t)=O\left(\ceil*{\frac{\size(\mathcal{P})}{t}}\right)$.	
\end{theorem}


\begin{remark}
Here and in other places in the paper, we use ceilings in the asymptotics, because the size of the certificates needs to be at least one bit.
\end{remark}

The rest of this subsection is dedicated to the proof of Theorem~\ref{thm:tree-scaling}. So, let  $\mathcal{P}$ be a predicate on labeled trees, and let $(\mathbf{p},\mathbf{v})$ be a proof-labeling scheme  for $\mathcal{P}$ with $\size{(\mathbf{p},\mathbf{v})}=k$. First, note that we can restrict attention to trees with diameter $>t$. Indeed, predicates on labeled trees with diameter $\leq t$ are easy to verify since every node can gather the input of the entire tree in $t$ rounds. More precisely, if we have a scheme that works for trees with diameter~$>t$, then we can trivially design a scheme that applies to all trees, by adding a single bit to the certificates, indicating whether the tree is of diameter at most $t$ or not.

The certificate assignment in our scaling scheme is based on a specific decomposition of given tree~$T$.

\subparagraph{Tree decomposition.}

\begin{figure} 
	\centering
	\vspace{-4mm}\includegraphics[width=2.7in]{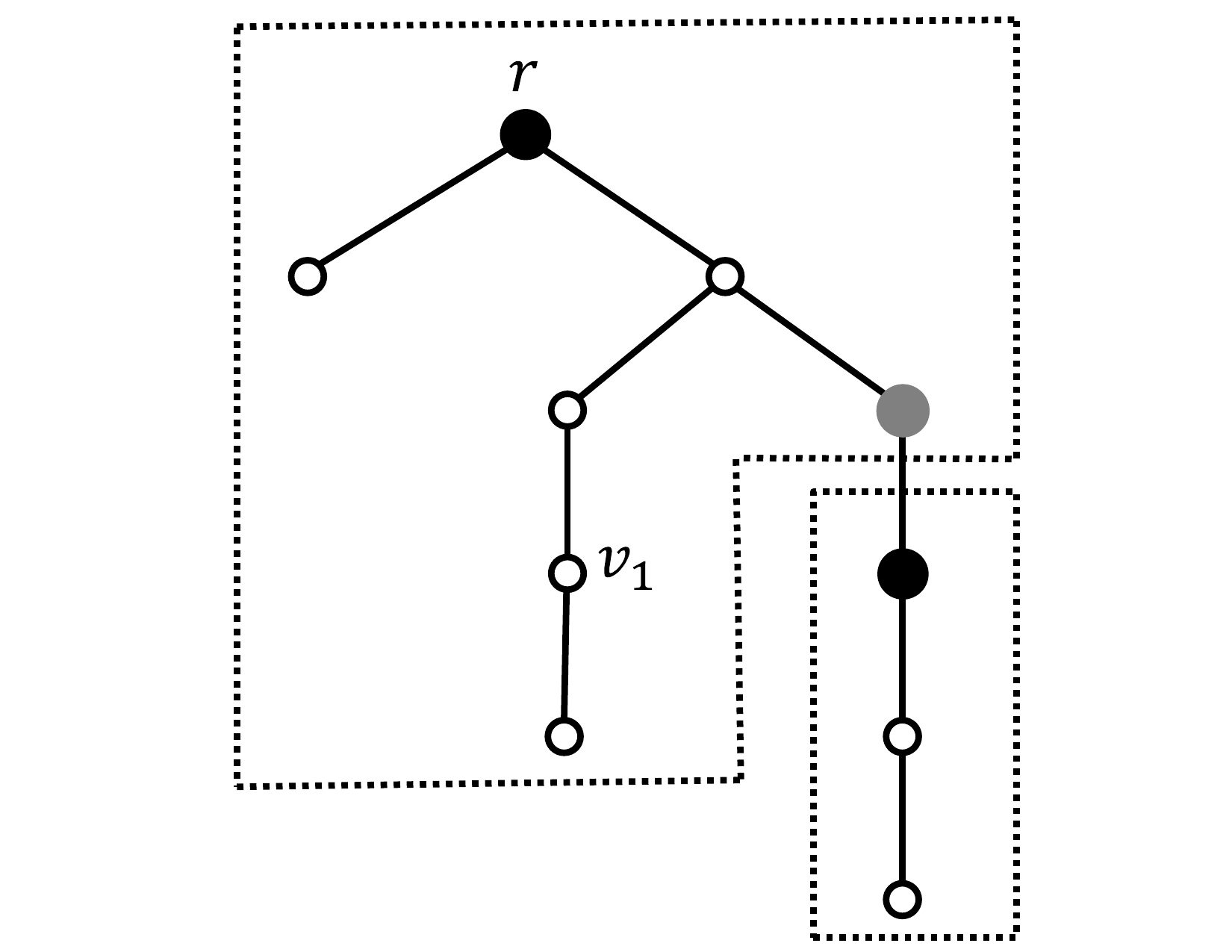}
	\caption{\it Tree decomposition for $t=6$ (i.e., $h=3$). Black nodes are \emph{border} nodes, the gray node is an \emph{extra-border} node, and white nodes are \emph{standard} nodes. 
	Note that $v_1$ is not a border node since the depth of its subtree is $1<h-1$. The domains are marked with dashed frames.}
	\label{fig-tree-scale}
\end{figure}

A key tool in the proof is a decomposition of the tree. It is defined the following way.

\begin{definition}\label{def:tree-decomposition}
Let $T$ be a tree of diameter $>t$, and let
\[
h=\lfloor t/2 \rfloor.
\]
The tree $T$ is rooted at some node~$r$.  A node $u$ such that
\[
\dist_T(r,u) \equiv 0\pmod h,
\]
and $u$ possesses a subtree of depth at least $h-1$ is called a \emph{border} node. Similarly, a node $u$ such that
\[
\dist_T(r,u) \equiv -1 \pmod h,
\]
and $u$ possesses a subtree of depth at least $h-1$ is called an \emph{extra-border} node. A node that is a border or an extra-border node is called a \emph{special} node. All other nodes are  \emph{standard} nodes. For every border node $v$, we define the \emph{domain} of $v$ as the set of nodes in the subtree rooted at $v$ but not in subtrees rooted at border nodes that are descendants of $v$.
\end{definition}

Figure~\ref{fig-tree-scale} illustrates of the tree decomposition, and can also serve as intuition for the following lemma.

\begin{lemma}\label{lem:marking} ~
	\vspace{-1ex}
	\begin{enumerate}
		\setlength\itemsep{1pt}
		\item The domains form a partition of the nodes in the tree $T$.
		\item Every  domain forms a tree rooted at a border node, with depth in the range $[h-1,2h-1]$.
		\item Two adjacent nodes of $T$ are in different domains if and only if they are both special.
	\end{enumerate}
\end{lemma}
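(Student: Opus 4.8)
The plan is to trade the recursive definition of domains for an explicit ancestry criterion and then read off the three statements. Root $T$ at $r$ as in the construction, and keep the definitions of \emph{border}, \emph{extra-border}, \emph{special} and \emph{standard} node. The key observation, which I would establish first, is that for a border node $v$ a node $x$ lies in the \emph{domain} of $v$ if and only if $v$ is the \emph{deepest} border node on the path from $r$ to $x$ (allowing $v=x$): ``$x$ belongs to the subtree rooted at $v$'' means $v$ is an ancestor of $x$ or $v=x$, and ``$x$ belongs to no subtree rooted at a border descendant of $v$'' means that no border node distinct from $v$ lies on the path between $v$ and $x$. All three items then follow from this criterion together with one fact about the root.

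For the partition (item~1), the only thing to verify is that every node has \emph{some} border node among itself and its ancestors, which amounts to showing that $r$ is a border node. Here I would invoke the standing assumption that the diameter of $T$ exceeds $t\ge 2h$: in a tree the eccentricity of every vertex is at least half the diameter, so $r$ has eccentricity at least $h$ and hence possesses a subtree of depth at least $h-1$; since $\dist_T(r,r)=0\equiv 0\pmod h$, the root is a border node. Consequently, for each $x$ the set of border nodes among $x$ and its ancestors is non-empty and linearly ordered by ancestry, hence has a unique deepest element, so by the criterion $x$ lies in exactly one domain.

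For item~2 I would first note that each domain is connected: if $x$ is in the domain of $v$ then, by the criterion, so is every node on the $v$--$x$ path, so the domain of $v$ is a subtree of $T$ rooted at the border node $v$. The lower bound $h-1$ on its depth is immediate from the defining property of a border node — $v$ has a descendant at relative depth $h-1$, and the $h-1$ intermediate levels contain no depth divisible by $h$, hence no border node, so that descendant remains in the domain. For the upper bound $2h-1$, take $x$ in the domain of $v$ at relative depth at least $h$ and let $y$ be its ancestor at relative depth exactly $h$; the criterion gives that $y$ is in the domain of $v$, and since $y$ sits at an absolute depth divisible by $h$ yet is not a border node, it is a standard node, so by definition the subtree rooted at $y$ has depth at most $h-1$; hence $x$ is within distance $h-1$ of $y$, i.e.\ at relative depth at most $2h-1$ from $v$.

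For item~3, let $u$ and its parent $u'$ be adjacent. The deepest border node on the $r$--$u$ path is $u$ itself when $u$ is a border node, and otherwise coincides with the deepest border node on the $r$--$u'$ path; hence $u$ and $u'$ lie in different domains exactly when $u$ is a border node. It remains to identify ``$u$ is a border node'' with ``$u$ and $u'$ are both special'': if $u$ is a border node it is special, and $u'$ — at depth $\equiv -1\pmod h$ and carrying in its subtree the depth-$(\ge h-1)$ subtree of $u$ — is an extra-border node, hence also special; conversely, two adjacent special nodes have depths $\delta$ and $\delta-1$ that are each congruent to $0$ or $-1$ modulo $h$, and a one-line case check on these congruences (using $h\ge 3$, which we may assume since the theorem is trivial for constant~$t$) forces $\delta\equiv 0\pmod h$, so $u$ is a border node. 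I expect the only delicate point in the whole argument to be the upper bound $2h-1$ in item~2: this is the single place where the precise subtree-depth threshold in the definitions is used, and it requires being consistent about whether ``depth'' is counted in edges or in vertices. Note also that, for the later use of the lemma, only the implication ``different domains $\Rightarrow$ both special'' is actually needed — this is what guarantees that adjacent domains touch only along special nodes, whose original certificates are preserved — so the converse, and with it the restriction $h\ge 3$, is merely cosmetic.
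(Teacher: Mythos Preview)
Your proof is correct and takes essentially the same approach as the paper's: both pass through the ``deepest border ancestor'' criterion and derive the three items by the same case analyses (your direct argument for the upper bound in item~2 is simply the contrapositive of the paper's proof by contradiction). One harmless slip: a non-border node $y$ at depth divisible by $h$ has subtree depth at most $h-2$, not $h-1$, which only tightens your bound; you are also more careful than the paper in flagging the $h\ge 3$ assumption needed for the converse direction in item~3.
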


\begin{proof}
	Let us first prove the first item. On the one hand, every node belongs to a domain. This is because every node has at least one border ancestor, since the root is a border node. Indeed, the root $r$ has depth $0$, and the diameter of $T$ is at least $t+1$, which implies that $r$ necessarily possesses a subtree of depth at least $h-1$. On the other hand, every node $u$ belongs to a unique domain. This is simply because the closest border ancestor of $u$ is uniquely defined.
	
	To establish second item, we consider the domain of a border node $u$. Note that, for any node $v$ in the domain of $u$, $v$ is a descendent of $u$ in $T$, and all the nodes in the shortest path between $u$ and $v$ are also in the domain of $u$. Thus the domain of $u$ is indeed a tree rooted at $u$. The depth of a domain is at least $h-1$. Indeed, if the subtree rooted at~$u$ has depth $<h-1$, then, by definition, $u$ is not special. It remains to show that the depth of a domain is at most $2h-1$. Let us assume for the purpose of contradiction that  the depth~$d$ of the domain of $u$ satisfies $d>2h-1$. Then there exists a path of length at least $2h$ starting at $u$, and going downward the tree for reaching a leaf $v$ of this domain. Then let us consider the node $u'$ of that path which is at distance $h$ from~$u$. Node $u'$ is not a border node, since otherwise $v$ would not be in domain of $u$ but in the domain of $u'$. However, node $u'$ is a border node as its depth is $0$ modulo~$k$, and it has a subtree of depth at least $d-h>h-1$. This contradiction completes the proof of item~2.
	
Finally, for establishing the third item, let us consider an edge $\{u,v\}$ in the tree $T$, and let us assume, w.l.o.g., that  $u$ is the parent of $v$ in $T$. By construction, there can be  three cases only. If neither of the two nodes $u$ and $v$ is a border node, then both belong to the same domain, as they have the same closest border ancestor. If $u$ is a border node, and $v$ is a standard node, then $v$ is in the domain of $u$. Finally, if $v$ is a border node, then necessarily~$u$ is an extra-border node, in which case $u$ and $v$ do not belong to the same domain since $v$ is in its own domain, while $u$ cannot belong to the domain of $v$. Therefore, Item~3 holds in these three cases.
\end{proof}

\subparagraph{Verifying the decomposition.}

The certificates of the radius-$t$ proof-labeling scheme contain a 2-bit field indicating to each node whether it is a root, border, extra-border, or standard node. Let us show that this part of the certificate can be verified in $t$ rounds. The prover orients the edges of the tree towards the root $r$.

Such an orientation can be given to the edges of a tree by assigning to each node its distance to the root, modulo 3. The assumption is that every edge is oriented from $d$ to $d-1 \mod 3$. Hence, a node with distance $d$, is either the root, and then and all of its neighbors are at distance $d+1 \mod 3$, or it has exactly one neighbor with distance $d-1 \mod 3$ which is its parent in the tree. These distances can obviously be checked locally, in just one round. So, in the remaining of the proof, we assume that the nodes are given this orientation upward the tree. The following lemma proves that the decomposition into border, extra-border, and standard nodes can be checked in $t$~rounds.

\begin{lemma}\label{lem:check-marking}
	Given a set of nodes marked as border, extra-border, or standard in an oriented tree, there is a verification protocol that checks whether the marking corresponds to a tree decomposition (as in Definition~\ref{def:tree-decomposition}), in $2h<t$ rounds.
\end{lemma}

\begin{proof}
	The checking procedure proceeds as follows. The root $r$ checks that it is a border node. Every border node checks that its subtree truncated at depth $2h$ fits with the decomposition. That is, it checks that: (1)~no nodes in its subtree are border nodes except nodes at depth $h$ and $2h$, (2)~no nodes  in its subtree are extra-border nodes except nodes at depth $h-1$ and $2h-1$, and  (3)~the nodes  in its subtree that are special at depth $h-1$ and $h$ do have a subtree of depth at least $h-1$. By construction, this procedure accepts any marking which is correct with respect to the decomposition rule.
	
	Conversely, let us suppose that the algorithm accepts a marking of the nodes. We prove that this marking is necessarily correct. We proceed by induction on the depth of the nodes. At the root, the verifier checks that $r$ is special as a border node, and it checks that the domain of $r$ is correctly marked. In particular, it checks that the nodes of depth $h$ that are not in its domain are properly marked as border nodes. So, the base of the induction holds. Now, assume that, for~$\alpha\geq 0$, all the domains whose border nodes stand at depth at most $\alpha h$ are properly marked. The fact that the border nodes at depth $\alpha h$ accept implies that all the nodes at depth $(\alpha+1) h$ that are not in the domain of the border nodes at depth $\alpha h$ are properly marked. These nodes verify their own domains, as well as all the domains down to depths $(\alpha+1) h$, are all correct. Since none of these nodes reject, it follows that all the domains whose border nodes stand at depth at most $(\alpha+1) h$ are properly marked. This completes the induction step, and hence the proof of the lemma.
\end{proof}

We are now ready to describe the radius-$t$ proof-labeling scheme for an arbitrary predicate. From Lemma~\ref{lem:check-marking} (and the paragraph above it that defines the decomposition), we can assume that the nodes are correctly marked as root, border, extra-border, and standard, with a consistent orientation of the edges towards the root.

\subparagraph{The radius-$t$ proof-labeling scheme. }

We are  considering the arbitrarily given predicate~$\mathcal{P}$ on labeled trees, with its proof-labeling scheme $(\mathbf{p},\mathbf{v})$ using certificates of size~$k$ bits. Before reducing the size of the certificates to $O(k/t)$ by communicating at radius~$t$, we describe a proof-labeling scheme at radius~$t$ which still uses large certificates, of size $O(k)$, but stored at a few nodes only, with all other nodes storing no certificates.

\begin{lemma}\label{lem:labels-on-marked-nodes}
	There exists a radius-$t$ proof-labeling scheme for $\mathcal{P}$, in which the prover assigns certificates to special nodes only, and these certificates have size $O(k)$.
\end{lemma}

\begin{proof}
	On legally labeled trees, the prover provides every special node (i.e., every border or extra-border node) with the same certificate as the one provided by~$\mathbf{p}$. All other nodes are provided with no certificates.
	
	On arbitrary labeled trees, the verifier is active  at border nodes only, and all non-border nodes systematically accept (in zero rounds). At a border node $v$, the verifier first gathers all information at radius $2h$. This includes all the labels of the nodes in its domain, and of the nodes that are neighbors to a node in its domain. Then $v$ checks whether there exists an assignment of $k$-bit certificates to the standard nodes in its domain that results in $\mathbf{v}$ accepting at every node in its domain. If this is the case, then $v$ accepts, else it rejects. There is a subtle point worth to be mentioned here. The value of~$k$ may actually depend on~$n$, which is not necessarily known to the nodes. Nevertheless, this can be easily fixed as follows. The $t$-PLS prover is required to provide all nodes with certificates of the same size (the fact that all certificates have identical size can trivially be checked in just one round). Then $k$~is simply inferred from the certificate size in the $t$-PLS, by multiplying this size by~$t$, whose value is, as specified in Section~\ref{sec:model-and-notations}, given as input to each node.
	
	Note that, as every border node $v$ has a complete view of its whole domain, and of the nodes at distance~1 from its domain, $v$ considers all the nodes that are used by $\mathbf{v}$ executed at the nodes of its domain. Also note that the execution of $\mathbf{v}$ at nodes in the domain of $v$ concerns only nodes that are either in the domain of $v$, or are special. This follows from the third item in Lemma \ref{lem:marking}. Thus no two border nodes will simulate the assignment of certificates to the same node.
	
	We now prove that, in an oriented marked tree, this scheme is correct.
	
	-- Assume first that the labeled tree satisfies the predicate  $\mathcal{P}$. Giving to the special nodes the certificates as assigned by $\mathbf{p}$, all the border nodes will be able to find a proper assignment of the certificates for the standard nodes in their domain so that $\mathbf{v}$ accepts at all these nodes, since, as the labeled tree satisfies the predicate  $\mathcal{P}$, there must exists at least one. This leads every node to accept.
	
	-- Suppose now that every border node accepts. It follows that, for every border node, there is an assignment of certificates to the nodes in its domain such that $\mathbf{v}$ accepts these certificates at every node. The union of these partial assignments of certificates defines a certificate assignment to the whole tree that is well-defined according to the first item of Lemma \ref{lem:marking}. At every node, $\mathbf{v}$ accepts since it has the same view as in the simulation performed by the border nodes in their respective domains.  Thus $\mathbf{v}$ accepts at every node of $T$, and therefore it follows that the labeled tree satisfies  $\mathcal{P}$.
\end{proof}

Lemma \ref{lem:labels-on-marked-nodes} basically states that there is a radius-$t$ proof-labeling scheme in which the prover can give certificates  to special nodes only. We now show how to spread out the certificates of the border and extra-border nodes to obtain smaller certificates. The following lemma is the main tool for doing so. As this lemma is also used further in the paper, we provide a generalized version of its statement, and we later show how to adapt it to the setting of the current proof.

We say that a local algorithm~$\mathcal{A}$ \emph{recovers} an assignment of certificates provided by some prover $\mathbf{q}$ from an assignment of certificates provided by another prover $\mathbf{q}'$ if, given the certificates assigned by~$\mathbf{q}'$ as input to the nodes, $\mathcal{A}$ allows every node to output its certificate such as assigned by $\mathbf{q}$. We define a \emph{special} prover as a prover which assigns certificates only to the special nodes, all other nodes being given empty certificates.

\begin{lemma}\label{lem:spreading}
	For every prover $\mathbf{q}$, there exists a $\LOCAL$ algorithm $\mathcal{A}$ and a prover $\mathbf{q}'$ satisfying the following. For every $s \geq 1$, for every oriented marked tree $T$ of depth at least $s$, and for every assignment of $b$-bit labels provided by  $\mathbf{q}$ to the nodes of $T$, there exists assignment of $O(\ceil*{b/s})$-bit labels provided by  $\mathbf{q}'$ to the nodes of $T$ such that $\mathcal{A}$ recovers $\mathbf{q}$ from $\mathbf{q}'$ in $s$ rounds.
\end{lemma}

\begin{proof}
	We first describe the prover $\mathbf{q}'$. For each special node $v$, let us partition the certificate $\mathbf{q}(v)$ assigned to node $v$ by the special prover $\mathbf{q}$ into $s$ parts of size at most $\ceil*{b/s}$. Then one picks an arbitrary path starting from $v$, of length $s-1$, going downward the tree. Note that such a path exists by definition of border and extra-border nodes. For every $i\in\{0,\dots,s-1\}$, the $i$th part of the certificate  $\mathbf{q}(v)$ is assigned to the $i$th node of that path as its certificate in $\mathbf{q}'$. As such, every node is given at most two parts of the initial certificates, as the paths starting at each of the border nodes are non intersecting, and the paths starting at each of the extra-border nodes are non intersecting as well.  To recover the original certificates, for every special node $v$, the algorithm $\mathcal{A}$ simply inspects the tree at distance $s-1$ downward, for gathering all the parts of the initial certificate $\mathbf{q}(v)$ of $v$. Then $v$ concatenates these parts, and $v$ outputs the resulting certificate. All other nodes output a certificate formed by the empty string.
\end{proof}

We have now all the ingredients to prove Theorem  \ref{thm:tree-scaling}.

\subparagraph{Proof of Theorem  \ref{thm:tree-scaling}.}

In the radius-$t$ proof-labeling scheme, the prover chooses a root and an orientation of the tree $T$, and provides every node with a counter modulo~3 in its certificate allowing the nodes to check the consistency of the orientation. Then the prover constructs a tree decomposition of the rooted tree, and provides every node with its type (root, border, extra-border, or standard) in its certificates. Applying Lemmas~\ref{lem:labels-on-marked-nodes} and~\ref{lem:spreading}, the prover spreads the certificates assigned to the special nodes by $\mathbf{p}$. Every node will get at most two parts, because only the paths associated to a border node and to its parent (an extra-border node) can intersect. Overall, the certificates have size $O(k/h)=O(k/t)$. The verifier checks the orientation and the marking, then recovers the certificates of the special nodes, as in Lemma~\ref{lem:spreading}, and performs the simulation as in Lemma~\ref{lem:labels-on-marked-nodes}. This verification can be done with a view of radius $t\leq 2h$, yielding the desired radius-$t$ proof labeling scheme. \qed

\subsection{Linear scaling in cycles and grids}

For the proof techniques of Theorem \ref{thm:tree-scaling} to apply to other graphs, we need to compute a partition of the nodes into the two categories, special and standard, satisfying three main properties. First, the partition should split the graph into regions formed by standard nodes, separated by special nodes. Second, each region should have a diameter small enough for allowing special nodes at the border of the region to simulate the standard nodes in that region, as in Lemma~\ref{lem:labels-on-marked-nodes}. Third,  the regions should have a diameter large enough to allow efficient spreading of certificates assigned to special nodes over the standard nodes, as in Lemma~\ref{lem:spreading}. For any graph family in which one can define such a decomposition, an analogue of Theorem~\ref{thm:tree-scaling} holds. We show that this is the case for cycles and grids.

\begin{corollary}\label{cor:cycle-and-grid-scaling}
	Let  $\mathcal{P}$ be a predicate on labeled cycles.
	If there
	exists a (radius-1) proof-labeling scheme $(\mathbf{p},\mathbf{v})$ for $\mathcal{P}$ with $\size{(\mathbf{p},\mathbf{v})}=k$, then $\pls(\mathcal{P},t)=O\left(\ceil*{\frac{k}{t}}\right)$. The same holds for predicates on  2-dimensional labeled grids.
\end{corollary}


\begin{figure} 
	\centering
	\vspace{-4mm}\includegraphics[width=3.5in]{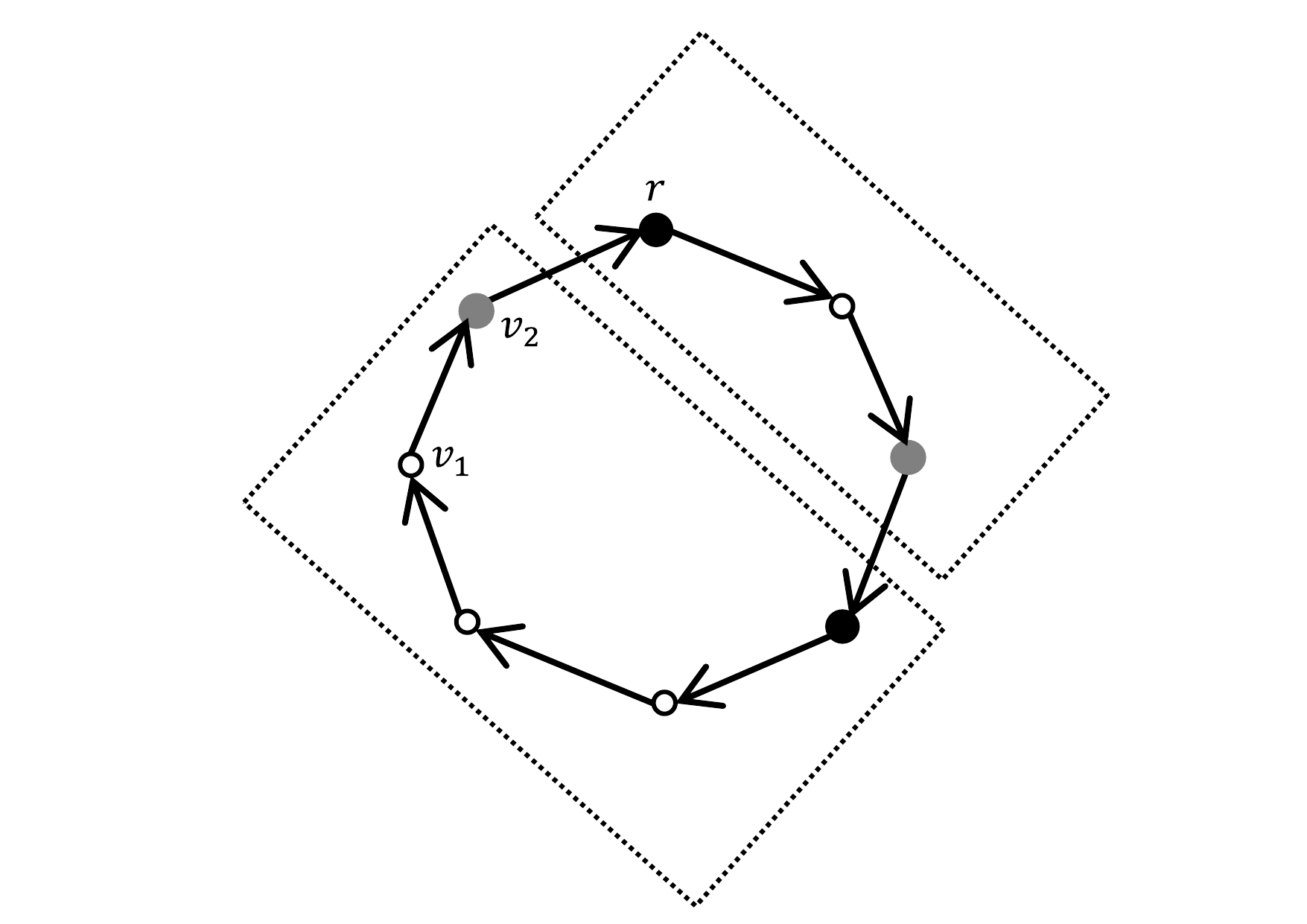}
	\caption{\it Cycle decomposition for $t=6$  (i.e., $h=3$). Black nodes are \emph{border} nodes,  gray nodes are  \emph{extra-border} nodes, and white nodes are \emph{standard} nodes. Note that $v_1$ is not a border node since the number of nodes between $v_1$ and $r$ in the oriented cycle is $1<h-1$, and $v_2$ is  an extra-border node since this is the farthest node from $r$ in the oriented cycle. The domains are marked with dashed frames.}
	\label{fig-cycle-scale}
\end{figure}

\begin{figure} 
	\centering
	\vspace{-4mm}\includegraphics[width=4in]{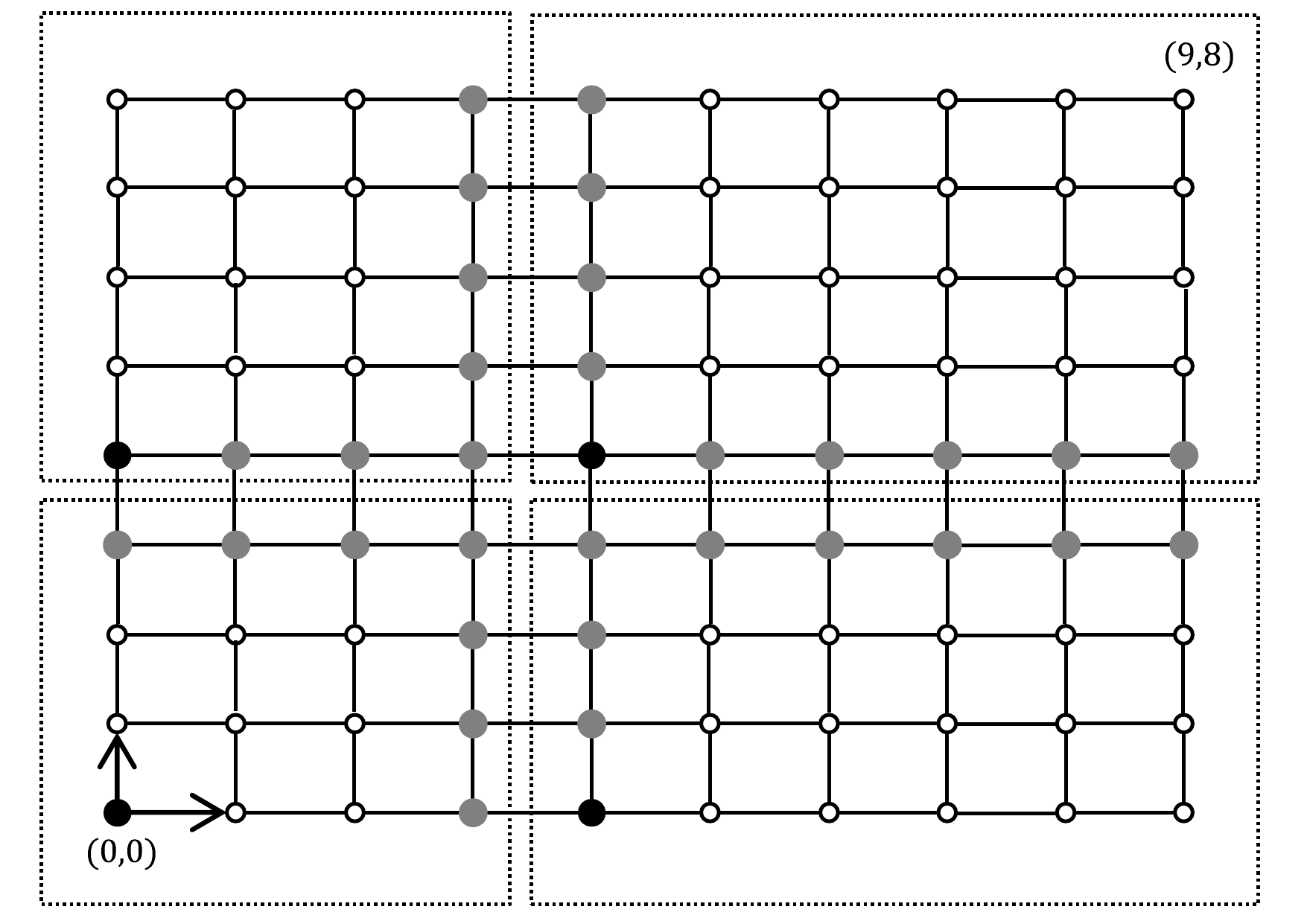}
	\caption{\it Grid decomposition for $t=16$ (i.e., $h=4$). Black nodes are \emph{border} nodes,  gray nodes are \emph{extra-border} nodes, and white nodes are \emph{standard} nodes.  The domains are marked with dashed frames.}
	\label{fig-grid-scale}
\end{figure}

See Figure~\ref{fig-cycle-scale} for an example of cycle decomposition. The idea is to orient the cycle edges in a consistent way, starting from an arbitrary node $r$. The partition to domains is exactly as in a tree,  while ignoring the last oriented edge that closes the cycle with $r$ (e.g., the edge $(v_2,r)$ in Figure~\ref{fig-cycle-scale}). The only difference from a tree is that the farthest node from $r$ in the oriented cycle ($v_2$ in Figure~\ref{fig-cycle-scale}) must be an extra-border node in order to make a proper partition in which  there are two spacial nodes on every boundary.
Also, see Figure~\ref{fig-grid-scale} for an example of grid decomposition.

\begin{proof}[Proof of Corollary~\ref{cor:cycle-and-grid-scaling}.]
	We just explain how the proof of Theorem~\ref{thm:tree-scaling}  can be adapted to apply for cycles and grids.
	
	For every labeled cycle, the prover picks an arbitrary node $r$, which will play the same role as the root chosen in the proof of Theorem~\ref{thm:tree-scaling}, and an orientation of the cycle pointing toward $r$. The chosen node is called \emph{leader}. Let
	$
	h=\lfloor t/2 \rfloor.
	$
	The prover marks as border node every node $u$ such that
	$
	\dist(r,u) \equiv 0\pmod h,
	$
	where the distance is taken in the oriented cycle. Similarly, the prover marks as extra-border node every node $u$ such that
	$
	\dist(r,u) \equiv -1\pmod h.
	$
	Note that the border and extra-border nodes that are the furthest away from the leader by the orientation may actually be close to the leader in the undirected cycle. As this may cause difficulties for spreading the certificates, the prover does not mark them, and keep these nodes standard. The domain of a border node is defined in a way similar to trees. The leader, the orientation, and the marking can be checked in a way similar to the proof of Lemma~\ref{lem:check-marking}. In particular, observe that the size of each domain is  at most~$t$. The marking separates the graph into independent domains that can be simulated in parallel as in Lemma~\ref{lem:labels-on-marked-nodes}. The diameter of each domain is at least $t/2$ which allows one to do the spreading as in Lemma~\ref{lem:spreading}, resulting in certificates of size $O(k/t)$.
	
	For every labeled grid, the prover provides the edges with north-south and east-west orientations, using two counters modulo~3. In a grid $p\times q$ with $n=pq$ nodes, this orientation induces a coordinate system with edges directed from $(0,0)$, defined as the south-west corner, to $(p,q)$, defined as the north-east corner. The leader is the node at position $(0,0)$. Let $h=\lfloor t/4\rfloor$. The partition of the nodes is as follows. Every node with coordinate $(x,y)$ where both $x$ and $y$ are $0$ modulo $h$ are the border nodes. Non-border nodes with coordinate $(x,y)$ where $x$ or $y$ equals $0$ or $-1$  modulo $h$ are extra-border nodes. Now, as for cycles, we slightly modify that decomposition for avoiding domains with too small diameter. Specifically, north-most border and extra-border nodes, and the east-most border and extra-border nodes are turned back to standard nodes. The domain of a border node is composed of all nodes with larger $x$-coordinate and larger $y$-coordinates for which there are no closer border node in the oriented grid. Using the same technique as in the proof of Lemma~\ref{lem:check-marking}, this partition of the nodes can be checked locally. The simulation as performed in the proof of Lemma~\ref{lem:labels-on-marked-nodes} can be performed similarly using that decomposition, because the domains have diameter at most $t$, and are well separated by special nodes. Finally, the spreading of the certificates as in Lemma~\ref{lem:spreading} can be done in the following way. For every special node $(x,y)$ where $x$ equals $0$ or $-1$ modulo~$h$, the certificate is spread over the $h-1$ nodes to the east. For every special node $(x,y)$ where $y$ equals $0$ or $-1$ modulo~$h$, the certificate is spread over the $h-1$ nodes to the north. Note that there is always enough space to the east or to the north of the special nodes as we have removed the special nodes that could be too close to the east and north borders. Also note that some node have their certificates spread in two directions, but this does not cause problem as it just increases the size of the certificates by a constant factor.
\end{proof}

By the same techniques, Corollary~\ref{cor:cycle-and-grid-scaling} can be generalized to toroidal 2-dimensional labeled grids, as well as to $d$-dimensional labeled grids and toruses, for every constant $d\geq 2$.

\section{Growth-Dependent Scaling of Uniform Proof-Labeling Schemes}
\label{sec:optimal-uniform}

In Section~\ref{sec:scaling-on-trees}, we presented a general tradeoff for every predicate, on \emph{specific graph structures}. In this section, we give a general tradeoff on \emph{every graph structure}, for specific type of schemes.
We say that a graph $G=(V,E)$ has \emph{growth}~$b=b(t)$ if, for every $v\in V$ and $t\in[1,D]$, we have that $|B_G(v,t)|\geq b(t)$. 
We say that a proof-labeling scheme is \emph{uniform} if the same certificate is assigned to all nodes by the prover.

\begin{theorem}\label{theo:universal}
Let $\mathcal{G}$ be a family of graphs with growth factor  $b$,
let  $\mathcal{P}$ be a predicate on labeled graphs in $\mathcal{G}$, and let us assume that there exists a uniform 1-PLS $(\mathbf{p},\mathbf{v})$ for $\mathcal{P}$ with $\size{(\mathbf{p},\mathbf{v})}=k$.

	Then, $\pls(\mathcal{P},t)=O\left(\ceil*{\frac{k}{b(t)}}\log n\log k\right)$ for every $t$ for which $b(t)  \geq \log n $.%

\end{theorem}

The rounding up operator $\ceil*{\cdot}$ is used to indicate that the scaling cannot reduce the logarithmic factors outside it, even in the case $\frac{k}{b(t)}<1$.

In the proof of this theorem, we use the probabilistic method to prove the existence of the required certificates. However, we emphasize that the scheme is deterministic.
We will use the classic Chernoff bounds (see, e.g.,~\cite[Chapter 4]{MitzenmacherU05}).

\begin{lemma}[Chernoff bounds]\label{lem:chernoff}
Suppose $Z_1,\dots, Z_m$ are independent random variables taking values in $\{0, 1\}$, and let $Z=\sum_{i=1}^mZ_i$. For every $0\leq \delta\leq 1$, we have $\Pr[Z\leq (1-\delta)\mathbb{E}Z]\leq e^{-\frac12 \delta^2\mathbb{E}Z}$, and $\Pr[Z\geq (1+\delta)\mathbb{E}Z]\leq e^{-\frac13 \delta^2\mathbb{E}Z}$.
\end{lemma}

\begin{proof}[Proof of Theorem~\ref{theo:universal}]
	
	Let $(\mathbf{p},\mathbf{v})$ be a uniform 1-PLS for $\mathcal{P}$ with $\size{(\mathbf{p},\mathbf{v})}=k$. Let $s=(s_1,\dots,s_k)$, where $s_i\in \{0,1\}$ for every $i=1,\dots,k$, be the $k$-bit certificate assigned to every node of $G$ by the prover $\mathbf{p}$.
	Let $t\geq 1$ be such that $k\geq b(t)\geq c \log n$ for a constant $c$ large enough. 
	For every node $v\in V$, set the certificate of $v$, denoted $s^{(v)}$, as follows: 
	for every $i=1,\dots,k$, $v$~stores the pair $(i,s_i)$ in~$s^{(v)}$ with probability $\frac{c \log n}{b(t)}$. 
	
	\begin{itemize}
		\item On the one hand, for every $v\in V$, let $X_v$ be the random variable equal to the number of pairs stored in~$s^{(v)}$. By a Chernoff bound (Lemma~\ref{lem:chernoff}), we have $\Pr[X_v\geq \frac{2c\, k \log n}{b(t)}]\leq e^{\frac{c\,k\log n}{3\,b(t)}} = n^{-\frac{c\,k}{3\,b(t)}}$. Therefore, by union bound,  the probability that a node~$v$ stores more than $\frac{2c\, k \log n}{b(t)}$ pairs $(i,s_i)$ is at most $n^{1-\frac{c\,k}{3\,b(t)}}$, which is less than $\frac12$ for $c$~large enough. 
		
		\item On the other hand, for every $v\in V$, and every $i=1,\dots,k$, let $Y_{v,i}$ be the number of occurrences of the pair $(i,s_i)$ in the ball of radius~$t$ centered at~$v$. By a Chernoff bound (Lemma~\ref{lem:chernoff}), we have $\Pr[Y_{v,i}\leq \frac{1}{2} c \log n]\leq e^{-\frac{c\log n}{8}}=n^{-c/8}$. Therefore, by union bound, the probability that there exists a node $v\in V$, and an index $i\in\{1,\dots,k\}$ such that none of the nodes in the ball of radius~$t$ centered at~$v$ store the pair $(i,s_i)$ is at most $kn^{1-c/8}$, which is less than $\frac12$ for $c$~large enough.
	\end{itemize}
	
	It follows that, for $c$ large enough, the probability that each node stores $O(k\log n/b(t))$ pairs $(i,s_i)$, and every pair $(i,s_i)$ is stored in at least one node of each ball of radius~$t$, is positive.

	Note that each pair can be represented by $O(\log k)$ bits.

	Therefore, there is a way for a prover to distribute the pairs $(i,s_i)$, $i=1,\dots,k$ to the nodes, such that 
	(1)~each node stores $O(k\log n\log k/b(t))$ bits, and 
	(2)~every pair $(i,s_i)$ appears at least once in every $t$-neighborhood of each node. 

	In the verification procedure, each node~$v$ first collects all pairs $(i,s_i)$ in its $t$-neighborhood, then recovers~$s$, and then runs the verifier of the original (radius-1) proof-labeling scheme.  
	
	Finally, we emphasize that we only use probabilistic arguments as a way to prove the existence of certificate assignment, 
	but the resulting proof-labeling scheme is deterministic and its correctness is not probabilistic.
\end{proof}


Theorem~\ref{theo:universal} finds direct application to the \emph{universal} proof-labeling scheme~\cite{KormanKP10,GoosS16}, which uses $O(n^2+kn)$ bits in $n$-node graphs labeled with $k$-bit labels. The certificate of each node consists of the $n\times n$ adjacency matrix of the graph, an array of $n$ entries each equals to the $k$-bit label at the corresponding node, and an array of $n$ entries listing the identities of the $n$~nodes. It was proved in~\cite{OstrovskyPR17} that the universal proof-labeling scheme can be scaled by a factor~$t$. Theorem~\ref{theo:universal} significantly improves that result, by showing that the universal proof-labeling scheme can actually be scaled by a factor~$b(t)$, which can be exponential in~$t$.

\begin{corollary}\label{cor:universal}
	For every predicate $\mathcal{P}$ on labeled graphs, there is a proof-labeling scheme for $\mathcal{P}$ as follows. For every graph $G$ with growth $b(t)$, let $t_0=\min\{t \mid b(t)  \geq \log n \}$. 

	Then, for every $t\ge t_0$  we have $\pls(\mathcal{P},t)=O\left(\frac{n^2+kn}{b(t)}\log n(\log n+\log k)\right)$. 

\end{corollary}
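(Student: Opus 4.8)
The plan is to derive the corollary as a direct instantiation of Theorem~\ref{theo:universal}, applied to the universal proof-labeling scheme of G\"o\"os and Suderland~\cite{GoosS16}. First I would recall that the universal scheme is, essentially by construction, a uniform $1$-PLS up to a negligible per-node index: every node receives the same $n\times n$ adjacency matrix and the same length-$n$ array of labels, and the only node-specific data is the $O(\log n)$-bit index of that node in the fixed ordering used by the matrix and the array. Formally this is not quite uniform, but the non-uniform part has size $O(\log n)$, which is swallowed by the $\widetilde{O}(\cdot)$ notation; alternatively, one stores the whole identity array at every node (also the same at all nodes) and lets each node locate itself by its own identity, making the scheme genuinely uniform. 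So we are in the hypothesis of Theorem~\ref{theo:universal} with $k=\Theta(n^2+kn)$ total certificate size (here the $k$ inside refers to the label length of the predicate's instances; I would keep the notation consistent with the corollary statement by writing $k = O(n^2 + kn)$ for the uniform certificate).

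Next I would simply invoke Theorem~\ref{theo:universal}: for every graph $G$ of growth $b(t)$, setting $t_0=\min\{t\ge 1 \mid b(t)\ge\log n\}$ and $t_1=\max\{t\ge1\mid k\ge b(t)\}$ (with the $k$ here being the uniform certificate size $O(n^2+kn)$), we get $\pls(\mathcal{P},t)=\widetilde{O}(k/b(t))=\widetilde{O}((n^2+kn)/b(t))$ for all $t\in[t_0,t_1]$. The one thing that needs a small argument is why the range can be stated as all $t\ge t_0$ rather than $t\in[t_0,t_1]$: for $t>t_1$ we have $b(t)>n^2+kn$, but then $b(t)\le |V|=n$ forces $n > n^2+kn$, which is impossible for $n\ge 1$; hence $t_1$ is never exceeded within the regime where $b(t)\le n$, i.e. for any $t\le D$. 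And for $t$ larger than the diameter the bound is trivial since $\pls(\mathcal{P},D)=0$ as noted in Section~\ref{sec:model-and-notations}. So stating the corollary for all $t\ge t_0$ is justified.

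The only genuinely delicate point — the ``main obstacle,'' such as it is — is the uniformity gap: Theorem~\ref{theo:universal} demands a strictly uniform $1$-PLS, whereas the natural universal scheme carries an $O(\log n)$-bit node index. I would resolve this either by the reformulation above (store the common identity array, self-locate by one's own identity) so that uniformity holds on the nose, or by a one-line extension of the proof of Theorem~\ref{theo:universal}: keep the $O(\log n)$-bit index uncompressed at its owner node and spread only the uniform part by the random-pieces argument; this adds only $O(\log n)=\widetilde{O}(1)$ bits per node, absorbed by $\widetilde{O}(\cdot)$. Beyond that, the remaining steps are purely bookkeeping: confirming $\pls(\mathcal{P})=O(n^2+kn)$ from~\cite{GoosS16}, and checking that the growth hypothesis is used only through the lower bound $|B_G(v,t)|\ge b(t)$, which is exactly what the Chernoff ``coverage'' step in Theorem~\ref{theo:universal} needs. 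I do not anticipate any calculation here beyond what is already carried out in the proof of Theorem~\ref{theo:universal}.
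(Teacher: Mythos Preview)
Your proposal is correct and follows essentially the same approach as the paper: the paper's justification (the paragraph preceding the corollary) is precisely that the universal certificate consists of the adjacency matrix, the label array, \emph{and} the array of node identities, so the scheme is genuinely uniform and Theorem~\ref{theo:universal} applies directly. Your handling of the $t_1$ endpoint (noting $b(t)\le n\le n^2+kn$) and of the uniformity gap is more explicit than the paper's, but the route is the same; one small correction: the reference is G\"o\"os and Suomela, not Suderland.
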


Theorem~\ref{theo:universal} is also applicable to proof-labeling schemes where the certificates have the same sub-certificate assigned to all nodes; in this case, the size of this common sub-certificate can be drastically reduced by using a $t$-round verification procedure. This is particularly interesting when the size of the common sub-certificate is large compared to the size of the rest of the certificates. An example of such a scheme is in essence the one described in~\cite[Corollary 2.4]{KormanKP10} for $\mbox{\sc iso}_k$. Given a parameter $k\in \Omega(\log n)$, let $\mbox{\sc iso}_k$ be the predicate on graph stating that there exist two vertex-disjoint isomorphic induced subgraphs of size~$k$ in the given graph.

\begin{corollary}
	For every $k\in\left[1,\frac{n}{2}\right]$, we have $\pls(\mbox{\sc iso}_k)=O(k^2+k\log n)$ and $\pls(\mbox{\sc iso}_k)=\Omega(k^2)$.
	In addition, for every $t>1$, we have 
	\[\pls(\mbox{\sc iso}_k,t)=O\left(\ceil*{\frac{k^2}{b(t)}}\log^2n\log\log n\right).
	\]%

\end{corollary} 

\begin{proof}
	We first sketch a 1-PLS. Every node is given as certificate the $k\times k$ adjacency matrix of the two isomorphic subgraphs, along with the corresponding IDs of the nodes in the two subgraphs.
	The certificates also provide each node with the ID of an arbitrary node in each subgraph, that we call the leaders. 
	In addition, the nodes are given certificates that correspond to two spanning trees rooted at the two leaders. 

	All this requires $O(k^2+k\log n)$-bit certificates.

	The verification procedure works as follows. 
	Every node first checks that the spanning trees structures are correct. Then the roots of the spanning trees check that they are marked as leader. Finally every node whose ID appears in one of the two adjacency matrices checks that its actual neighborhood corresponds to what it should be according to the given adjacency matrices. By construction, this is a valid 1-PLS, using certificates on $O(k^2+k\log n)$ bits. A simple adaptation of the proof of~\cite[Theorem 6.1]{GoosS16} enables one to prove that $\Omega(k^2)$ bits are needed. 
	
	For the case of $t$-PLS, a direct application of Theorem~\ref{theo:universal} to the part of the certificate that is common to all nodes gives a $t$-PLS for $\mbox{\sc iso}_k$ with $O\left(\ceil*{\frac{k^2+k\log n}{b(t)}}\log n\log(k^2+k\log n)\right)$-bit certificates.
	By analyzing the cases $k<\log n$ and $k\geq\log n$ separately, the corollary follows.%

\end{proof}

\section{Certifying Distance-Related Predicates}
\label{sec:distance-related}

In this section, we study two problems related to distances: diameter and spanners. 
For any labeled (weighted) graph $(G,x)$, the predicate $\mbox{\sc diam}$ on $(G,x)$ states whether, for every $v\in V(G)$,  $x(v)$ is equal to the (weighted) diameter of~$G$.
The next theorem states that there is a proof-labeling scheme for $\mbox{\sc diam}$ that weakly scales linearly, and that better scaling on all graphs is not possible. Theorem~\ref{thm: spanners scaling} states similar results for spanners.
It is unknown whether better scaling, and specifically scaling by factor $b(t)$, is possible for these problems on graphs with high growth.

\begin{theorem}
	\label{theo:scaling for diameter}
	There exists $c>0$ such that for every $t\in [c\log n,n/\log n]$, $\pls(\mbox{\sc diam},t)=O\left(\frac{n}{t}\log^2 n\right)$.
	Moreover, for every $t\in [1,n/\log n]$,
	$\pls(\mbox{\sc diam},t)=\Omega\left(\frac{n}{t\log n}\right)$.

\end{theorem}


The theorem follows from the next lemmas.

\begin{lemma}
	\label{lemma: pls diam scales linearly}
	There exists a constant $c$, such that for every $t\in [c\log n,n]$, $\pls(\mbox{\sc diam},t)=O\left(\frac{n}{t}\log^2 n\right)$.
\end{lemma}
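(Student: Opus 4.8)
\textbf{Proof plan for Lemma~\ref{lemma: pls diam scales linearly}.}
The plan is to mimic the randomized-spreading argument of Theorem~\ref{theo:universal}, but adapted to the fact that the distance certificates are not uniform across nodes. Start from the natural $1$-PLS for $\mbox{\sc diam}$ in which the prover gives every node $v$ the list of its distances $\dist_G(v,w)$ to all $n$ nodes $w$ (indexed by their identities), plus the claimed diameter value $x(v)$; this uses $O(n\log n)$ bits and lets each node check consistency with its neighbors (triangle-inequality-style constraints $|\dist(v,w)-\dist(u,w)|\le 1$ for edges $\{u,v\}$, the value $\max_w \dist(v,w)=x(v)$, and agreement of $x$ across edges). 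The goal is to thin out these lists so that each node stores only $\widetilde O(n/t)$ of the entries, while still being able to reconstruct the full distance vector it needs within its radius-$t$ ball.

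The key observation is the following reconstruction principle: if node $u$ knows $\dist_G(v', w)$ for some node $v'$ together with $\dist_G(u,v')$ (the latter being cheap — distances to nearby nodes are learned for free in $t$ rounds, since the ball $B_G(u,t)$ is fully visible), then $u$ can recover $\dist_G(u,w)$ exactly provided $v'$ lies on a shortest $u$--$w$ path, because then $\dist_G(u,w)=\dist_G(u,v')+\dist_G(v',w)$. More robustly, for \emph{any} $v'\in B_G(u,t)$ the value $\dist_G(u,v')+\dist_G(v',w)$ is an upper bound on $\dist_G(u,w)$, and it is tight for the right choice of $v'$; so if $u$ collects, over all $v'$ in its $t$-ball that have $w$'s entry, the minimum of $\dist_G(u,v')+\dist_G(v',w)$, it obtains exactly $\dist_G(u,w)$ as long as \emph{some} such $v'$ sits on a shortest $u$--$w$ path. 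Now fix a shortest path $\pi_{u,w}$; it has some length $\le n$, and it suffices that at least one node in the length-$(t/2)$ prefix of $\pi_{u,w}$ (which lies in $B_G(u,t)$) retains the entry for $w$. I would first handle the regime where this prefix has $\ge t/2 \ge (c/2)\log n$ nodes; the short-path case (where $\dist_G(u,w)<t/2$, so $w$ itself is in the ball and nothing needs to be stored) is trivial.

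The thinning is then done probabilistically: independently for each node $v'$ and each index $w$, node $v'$ keeps the pair $(w,\dist_G(v',w))$ with probability $p=\frac{c\log n}{t}$. Two Chernoff bounds, exactly as in Theorem~\ref{theo:universal}, finish it: (i) each node keeps $O(pn)=O(\frac{n\log n}{t})$ pairs with high probability, and each pair costs $O(\log n)$ bits, giving certificate size $O(\frac{n\log^2 n}{t})$; (ii) for each ordered pair $(u,w)$, the $\ge t/2$ nodes on the chosen shortest-path prefix each independently keep $w$'s entry with probability $p$, so the expected number kept is $\ge \frac{c}{2}\log n$, and the probability that none is kept is $n^{-\Omega(c)}$; a union bound over the $<n^2$ pairs $(u,w)$ makes the bad event have probability $<1/2$ for $c$ large enough, and combined with (i) the good configuration occurs with positive probability. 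Hence a deterministic assignment with both properties exists; the prover uses it, and the verifier in $t$ rounds gathers $B_G(u,t)$, reconstructs each $\dist_G(u,w)$ by the min-over-$v'$ formula above, and then runs the original $1$-PLS verifier on the reconstructed vectors. Completeness is clear; for soundness one argues that whatever (thinned) certificates are given, the reconstructed values at each node are consistent with what a radius-$1$ verifier would see, so acceptance everywhere forces the reconstructed vectors to be genuine distance vectors and $x(v)$ to be the true diameter — this soundness bookkeeping, making sure a cheating prover cannot exploit the $\min$ reconstruction to fake smaller distances, is the step I expect to require the most care; the resolution is that the reconstruction only ever yields \emph{upper} bounds on true distances from stored data, while the local consistency checks of the base scheme force \emph{lower} bounds, pinning the values exactly. $\qed$
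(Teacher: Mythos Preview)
Your proposal is essentially the paper's proof: same base $1$-PLS with per-node distance tables, same random thinning with probability $c\log n/t$, the same two Chernoff bounds (load per node, and coverage along the length-$t$ prefix of a fixed shortest path for every ordered pair), and the same $\min_{v'\in B(v,t)}\bigl(\dist(v,v')+D_{v'}(w)\bigr)$ reconstruction. The one place where your sketch drifts is the soundness remark that the reconstruction ``only ever yields upper bounds on true distances'': with a cheating prover the stored entries are arbitrary, so this is not literally true; the paper instead checks the \emph{stored} entries directly (triangle inequality across $B(v,t)$, plus a tight-decrease witness $D_{v'}(u)=D_v(u)-\dist(v,v')$ for every stored $u$) and argues that a minimal incorrect stored value forces rejection --- this is exactly the careful bookkeeping you anticipated, just organized on stored rather than reconstructed values.
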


\begin{proof}
	A proof-labeling scheme for diameter with optimal certificate size $\Theta(n \log n)$ bits has been designed in~\cite{Censor-HillelPP18}. We simply use this scheme for certifying that, for every node~$v$, the diameter of the graph is at least~$x(v)$. Indeed,~\cite{Censor-HillelPP18} uses only $O(\log n)$-bit certificates to certify the existence of a pair of nodes at mutual distance at least~$x(v)$ in the graph. The rest of the proof is dedicated to certifying that no pairs of nodes are at distance more than $x(v)$ in the graph, i.e., ``diameter $\leq x(v)$''. Namely, we show how the scheme in~\cite{Censor-HillelPP18} scales with the radius of verification. For this purpose, let us briefly recall this scheme. Each node $v$ of a graph $G=(V,E)$ is provided with a certificate $D_v$ consisting of a table with $n$~entries storing the ID of every node in $G$, and the distance to these nodes. (Every certificate is therefore on $O(n\log n)$ bits). Somewhat abusing notations, let us denote by $D_v(u)$ the distance to node $u$, as stored in table $D_v$. The verification proceeds by, first, having each node checking that it stores the same set of IDs as the ones stored by its neighbors, and that its own ID appears in its table. Second, each node checks that the distances in its certificate vary as expected. That is, each node $v$ checks that: (1)~$D_v(v)=0$, (2)~for every node $u$ and every neighbor~$v'$, $D_v(u)-w(\{v,v'\})\leq D_{v'}(u) \leq D_v(u)+w(\{v,v'\})$, and (3)~there exists a neighbor $v'$ such that $D_{v'}(u) = D_v(u)-w(\{v,v'\})$. Finally, every node $v$ checks that $D_v(u)\leq x(v)$ for every node $u$. This verification process is correct, as shown in~\cite{Censor-HillelPP18}. 
	
	Now, let $t \geq c \log n$ for a constant $c>0$ large enough, and let us construct a proof-labeling scheme for ``diameter $\leq x(v)$'', with radius~$t$. The idea is that each node $v$ does not store all entries in the table $D_v$ but only a fraction $t$ of these entries. The issue is to select which entries to keep, and which entries to drop. For our scheme to work, we need to guarantee that, if the distance to node $u$ is not stored in~$D_v$, then there is a node $v'$ on a shortest path from $v$ to $u$, at distance at most $t$ from $v$, that stores $\text{dist}(v',u)$ in its table~$D_{v'}$.
	
	We use the randomized hitting-set technique presented in~\cite{UllmanY91}, to prove the existence of a correct assignment of certificates: for every node $u\neq v$, each node $v$ keeps $\text{dist}(v,u)$ in its table $D_v$ with probability $\frac{c\log n}{t}$. In addition, node $v$ systematically keeps $D_v(v)=0$ in its table. 
	Note that our scheme is deterministic, although we use the probabilistic method to prove its existence.
	We derive the following two properties.  
	
	\begin{enumerate}
		\item For every pair of nodes $(u,v)$, let us denote by $P_{v,u}$ the path of length~$t$ formed by the $t$ first nodes on a shortest path from $v$ to $u$, and let $X_{v,u}$ denote the sum of $t$~Bernoulli random variables with parameter $\frac{c\log n}{t}$. By the use of Chernoff bounds (Lemma~\ref{lem:chernoff}), we have $\Pr[X_{v,u}\leq \frac12 c\log n] \leq e^{-\frac{c}{8}\log n}=n^{-c/8}<\frac{1}{2n^2}$ for $c$ large enough. Therefore, by union bound, the probability that there exists a pair of nodes $(u,v)$ such that no nodes of $P_{v,u}$ store the distance to node~$u$ is less than~$\frac12$. 
		
		\item For every node $v$, let $Y_v$ be the number of nodes for which $v$ keeps the distance these nodes. Again, by Chernoff bound (Lemma~\ref{lem:chernoff}), $\Pr[Y_v\geq \frac{2cn\log n}{t}]\leq e^{-\frac{c \, n \log n}{3\,t}}\leq e^{-\frac{c \log n}{3}}=n^{-c/3}<\frac{1}{2n}$ for $c$ large enough. Therefore, by union bound, the probability that there exists a node $v$ that stores the distances to more than $\frac{2cn\log n}{t}$ nodes is less than~$\frac12$. 
	\end{enumerate}
	
	Let $\mathcal{E}_{v,u}$ be the event ``at least one node of $P_{v,u}$ stores its distance to node~$u$'', and let $\mathcal{E'}_{v}$ be the event ``node $v$ stores no more than $\frac{2cn\ln n}{t}$ distances to other nodes''. We derive from the above that 
	\[
	\Pr[\forall (u,v)\in V\times V, \mathcal{E}_{v,u} \wedge \mathcal{E'}_{v}]>0.
	\]
	It follows that there exists an assignment of entries to be kept in each table $D_v, v\in V$, such that each resulting partial table is of size $O\left(\frac{n\log^2n}{t}\right)$ bits, and, for every two nodes $u$ and $v$, at least one node at distance at most~$t$, on a shortest path from $v$ to $u$, stores its distance to node~$u$. 
	
	It remains to show that these sparse certificates can be verified in~$t$ rounds. Let $B(v,t)$ be the ball of radius~$t$ around $v$. Each node $v$ verifies that, first, for every node $v'\in B(v,t)$ such that both $v$ and $v'$ store the distance to a same node~$u$, we have $D_{v'}(u)-\text{dist}(v,v') \leq D_v(u) \leq D_{v'}(u)+\text{dist}(v,v')$, and, second, for every node~$u$ such that $v$~stores its distance to~$u$, there exists a node $v'\in B(v,t)$ such that $D_v(u) = D_{v'}(u)+\text{dist}(v,v')$. Third, using the distances collected in $B(v,t)$, node $v$ constructs the table $D'_v$ where $D'_v(u)=D_v(u)$ if $u$ is stored in $D_v$, and $D'_v(u)=\min_{v'\in B(v,t)} (D_{v'}(u)+\text{dist}(v,v'))$ otherwise. Finally, node $v$ checks that $D'_v(u)\leq x(v)$ for every node $u$. If all these tests are passed, then $v$ accepts, otherwise it rejects. 
	
	By the setting of the partial tables $D_v, v\in V$, in a legal instance, we get that $D'_v(u)=\text{dist}(v,u)$ for every node~$u$, and therefore all nodes accept. Instead, if there exists $(u,v)\in V\times V$ such that information about $u$ is stored in $D_v$, but $D_v(u)\neq \text{dist}(u,v)$, then let us consider such a pair $(u,v)$ where $D_v(u)$ is minimum. For $v$ to accept, there must exist some node $v'\in B(v,t)$ such that $D_{v'}(u)=D_v(u)-\text{dist}(v,v')$. By the choice of the pair $(u,v)$, $D_{v'}(u)=\text{dist}(u,v')$, and thus $D_v(u)= \text{dist}(u,v)$, a contradiction. Therefore $v'$ cannot exist, and thus $v$ rejects. It follows that this scheme is a correct $t$-PLS for diameter, using $O\left(\frac{n}{t}\log^2 n\right)$-bit certificates. 
\end{proof}

\begin{lemma}
	\label{lemma: pls diam cannot scale better than linear}
	For every $t\in [1,n/\log n]$,
	$\pls(\mbox{\sc diam},t)=\Omega\left(\frac{n}{t\log n}\right)$.
\end{lemma}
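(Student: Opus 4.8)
The plan is to reduce from set‑disjointness, adapting to the $t$‑round setting the classical diameter lower‑bound construction of Frischknecht, Holzer and Wattenhofer in the proof‑labeling‑scheme form used in~\cite{Censor-HillelPP17} for Lemma~\ref{lemma: pls diam scales linearly}. From an instance $(S_A,S_B)$ of $\disj$ on $k$ bits I would build a labeled graph $(G,x)$ with $\Theta(n)$ nodes whose vertex set partitions as $V_A\cup V_B$ so that: $|V_A|=|V_B|=\Theta(k)$; the subgraph induced by $V_A$ depends only on $S_A$ and that induced by $V_B$ only on $S_B$; the cut between $V_A$ and $V_B$ consists of only $O(\log n)$ edges; and the diameter of $G$ equals a fixed value $d_0$ precisely when $S_A\cap S_B=\emptyset$, and differs from $d_0$ otherwise. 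Taking $x\equiv d_0$ everywhere gives $(G,x)\models\mbox{\sc diam}\iff S_A\cap S_B=\emptyset$. To accommodate $t$ rounds I subdivide each of the $O(\log n)$ cut edges into a path with $2t+1$ internal nodes (rescaling $d_0$), obtaining a graph $G'$; since $t\le n/\log n$ these subdivisions add only $O(t\log n)=O(n)$ nodes, so with the hidden constants chosen appropriately $G'$ still has $\Theta(n)$ nodes and we may keep $k=\Theta(n)$ input‑encoding nodes. Let $W$ be the set of all subdivision nodes, $|W|=O(t\log n)$; throughout, the vertex set and an identity assignment are fixed, independent of $(S_A,S_B)$, only the edges inside $V_A$ and inside $V_B$ varying.

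Then I would run a crossing argument across the \emph{thick} cut $W$. Suppose a $t$‑PLS for $\mbox{\sc diam}$ of size $s$ exists. Among the $2^k$ ``diagonal'' instances $(S,\overline S)$ — all disjoint, hence all satisfying $\mbox{\sc diam}$ — fix the certificates produced by the prover and record only their restriction to $W$, an object from a set of size at most $2^{s|W|}=2^{O(st\log n)}$. If $2^k>2^{O(st\log n)}$, two distinct diagonal instances $(S,\overline S)$ and $(S',\overline{S'})$ get the same $W$‑certificates; as $S\ne S'$, one mixed pair, say $(S,\overline{S'})$, is not disjoint, so $G'(S,\overline{S'})\not\models\mbox{\sc diam}$. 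Now glue: on $G'(S,\overline{S'})$ assign the $V_A$‑certificates of instance $(S,\overline S)$, the $V_B$‑certificates of instance $(S',\overline{S'})$, and the shared $W$‑certificates. Since every subdivided path has $2t+1$ internal nodes, no radius‑$t$ ball meets both $V_A$ and $V_B$; hence each node's radius‑$t$ view in $G'(S,\overline{S'})$ under these certificates coincides with its view in one of the two legal instances, so every node accepts — contradicting soundness. Therefore $2^k\le 2^{O(st\log n)}$, i.e.\ $s=\Omega\!\left(\frac{k}{t\log n}\right)=\Omega\!\left(\frac{n}{t\log n}\right)$.

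The ``evolving partition'' picture from the introduction is just the protocol rephrasing of this gluing — Alice and Bob each simulate their own side plus all of $W$ and then drop one $W$‑node per side per round, jointly covering every verdict after $t$ rounds, the only communication being the one‑shot exchange of the $O(st\log n)$ certificate bits on $W$ — but the static crossing formulation above is cleaner, and the role of nondeterministic communication complexity collapses to the elementary observation that the $2^k$ diagonal instances form a fooling set. I expect the genuine difficulty to be the construction itself: it must at once encode $\Theta(n)$ disjointness bits, keep $V_A$ and $V_B$ joined only through $O(\log n)$ channels, and make the diameter toggle exactly on intersection — and it must survive subdivision into length‑$\Theta(t)$ paths without collapsing distances or introducing short $V_A$–$V_B$ shortcuts, which is precisely where the distance bookkeeping of the $\widetilde{\Omega}(n)$ bound in~\cite{Censor-HillelPP17} does the work.
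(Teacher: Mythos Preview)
Your proposal is correct and takes essentially the same approach as the paper: the paper uses the Abboud--Censor-Hillel--Khoury diameter gadget with its $O(\log k)$ bit-encoding cross-edges stretched into $(2t+1)$-paths, then has Alice and Bob exchange only the certificates on those path nodes and simulate via the evolving partition you mention --- precisely the communication-complexity rephrasing of your crossing/fooling-set argument, as you yourself note. The one ingredient you leave implicit, the concrete graph and the verification that the diameter toggles on disjointness even after stretching, is the paper's Claim~\ref{claim: lb graph property}, handled there by adapting the analysis of~\cite{AbboudCHK16}.
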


\begin{figure}[t]
	\begin{center}
		\includegraphics[scale=1,
		trim=3cm 21.3cm 4.5cm 3cm,clip]{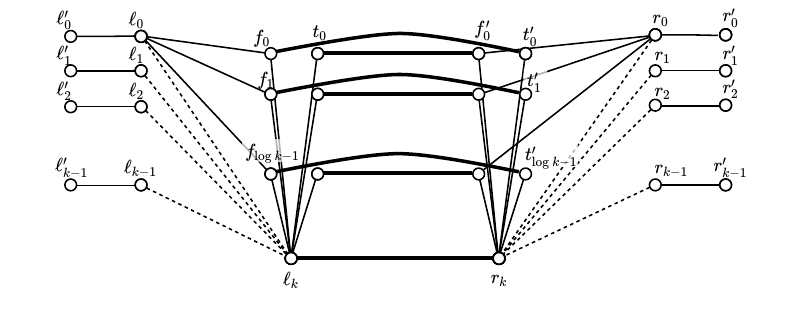}
		\caption{
			The lower bound graph construction.
			Thin lines represent $P$-paths,
			thick lines represent $(2t+1)$-paths.
			The dashed lines represent edges whose existence depend on the input (these edges are not considered in Claims~\ref{claim: lb graph distances between central nodes} and~\ref{claim: lb graph L R different index}).
			The paths connecting~$\ell_i$ and~$r_i$ to their binary representations
			are omitted, except for those of~$\ell_0$ and~$r_0$.}
		\label{fig: diam lb graph}
	\end{center}
\end{figure}

The proof of this lemma is based on two previous works.
First, Peleg and Rubinovich~\cite{PelegR99} proved distributed lower bounds by a reduction to communication complexity, and
second, Abboud et al~\cite{AbboudCHK16} gave a lower bound on the approximability
of the diameter, using a simpler reduction.
Both results are in  the \CONGEST{} model.
We adapt the graph construction from the latter and the simulation technique from the former, in order to prove lower bounds on the label size in the $t$-PLS model.
Interestingly, in our model there are no bandwidth restrictions,
which makes it more similar to the \LOCAL{} model
than to the \CONGEST{} model,
yet the constructions we use were designed to achieve \CONGEST{} lower bound.

We now describe the construction of the lower bound graph (see Figure~\ref{fig: diam lb graph}).
Let $k=\Theta(n)$ be a parameter whose exact value will follow from the graph construction.
Alice and Bob use the graph in order to decide \disj{} on $k$-bit strings.
Let $t$ be the parameter of the $t$-PLS,
which may or may not be constant,
and let $P$ be a constant, $1\leq P\leq t$, to be chosen later (for the proof of Lemma~\ref{lemma: pls diam cannot scale better than linear}, $P=1$ suffices, but we will reuse the construction for spanners later, and then we will need to set $P$ to a precise value).
The graph consists of the following sets of nodes:
$L= \set{\ell_0,\ldots,\ell_{k-1}}$,
$L'=\set{\ell'_0,\ldots,\ell'_{k-1}}$,
$T= \set{t_0,\ldots,t_{\log k -1}}$,
$F= \set{f_0,\ldots,f_{\log k -1}}$, and
$\ell_k$,
which will be simulated by Alice,
and similarly
$R= \set{r_0,\ldots,r_{k-1}}$,
$R'=\set{r'_0,\ldots,r'_{k-1}}$,
$T'=\set{t'_0,\ldots,t'_{\log k -1}}$,
$F'=\set{f'_0,\ldots,f'_{\log k -1}}$, and
$r_k$,
which will be simulated by Bob.

The nodes are connected by paths,
where the paths consist of additional, distinct nodes.
For each $0\leq i\leq k-1$,
connect by $P$-paths (i.e., paths of $P$ edges and $P-1$ new nodes) the pairs of nodes:
$(\ell_i, \ell'_i)$
and $(r_i, r'_i)$.
Add such paths also between
$\ell_{k}$ and all $t_h\in T$ and $f_h\in F$,
and between $r_{k}$ and all $t'_h\in T'$ and $f'_h\in F'$.
Connect by a $P$-path each $\ell_i \in L$ with the nodes representing its binary encoding,
that is, connect $\ell_i$ to each $t_h$ that satisfies $i[h]=1$,
and to each $f_h$ that satisfies $i[h]=0$,
where $i[h]$ is bit $h$ of the binary encoding of $i$.
Add similar paths between each $r_i\in R$
and its encoding by nodes $t'_h$ and $f'_h$.
In addition,
for each $0\leq h \leq \log k -1$,
add a $(2t+1)$-path from $t_h$ to $f'_h$ and from $f_h$ to $t'_h$,
and a similar path from $\ell_{k}$ to $r_{k}$.

We start by noting that the distances between most pairs of nodes is at most $4P+2t+1$. (The only nodes not considered in the following claim are the ones of~$R'$ and~$L'$.)

\begin{claim}
	\label{claim: lb graph distances between central nodes}
	The distance between every two nodes in $L\cup R\cup F\cup T\cup F'\cup T'\cup\{\ell_k,r_k\}$ and in the paths connecting these nodes is at most
	$4P+2t+1$.
\end{claim}

\begin{proof}
	The proof of this claim is by a lengthy but simple case analysis. 
	
	The distance from $\ell_k$ to every node in $F\cup T$ is $P$, and to every node in $F'\cup T'$ is $P+2t+1$, through $r_k$ or through the corresponding node in $F\cup T$.
	Hence, the distance from $\ell_k$ to every node in $F\cup T\cup F'\cup T'\cup\{\ell_k,r_k\}$ or in the paths between them is at most $P+2t+1$. 
	The same holds for distances from $r_k$.
	
	The distance from a node in $L$ to a node in 
	$F\cup T\cup F'\cup T'\cup\{\ell_k,r_k\}$ or in the paths between them is at most $3P+2t+1$, by a path connecting it to $\ell_k$ in $2P$ hops, and then to the target node. 
	If a node $u$ resides on a path from a node in $L$ to a node in $F\cup T$, then the distance from $u$ to any node in $F\cup T\cup F'\cup T'\cup\{\ell_k,r_k\}$ or in the paths between them is less than $3P+2t+1$, by a similar argument.
	
	For a node in $R$, or in the paths from nodes in $R$ to $F'\cup T'$, the distance to any node in $F\cup T\cup F'\cup T'\cup\{\ell_k,r_k\}$ is also at most $3P+2t+1$, by a similar argument.

	The distance between two nodes in $L$ is at most $4P$, going through $\ell_k$, and the same holds for nodes on path connecting $L$ and $F\cup T$.
	The distance between two nodes in $R$, or in the paths from $R$ to $F'\cup T'$,
	is similarly bounded, by a path through $r_k$.
	The distance between a node in $L$ and a node in $R$ is at most $4P+2t+1$, through $\ell_k$ and $r_k$, and the same holds for nodes on paths connecting $L$ and $F\cup T$ and $R$ to $F'\cup T'$.
	
	Finally, for two nodes in $F\cup T\cup F'\cup T'\cup\{\ell_k,r_k\}$ or in the paths between them, note that they reside on a $(4P+4t+2)$-cycle through $\ell_k$ and $r_k$,
	and thus their distance is at most $2P+2t+1$.
\end{proof}

\begin{claim}
	\label{claim: lb graph L R different index}
	The distance between two nodes in $L'$ is at most $6P$, and so is the distance between two nodes in $R'$. The same holds for two nodes on the path connecting $L'$ and $L$ (including nodes in $L'$), or on the path connecting $R'$ and $R$.
	
	The distance between two nodes $\ell'_i\in L'$ and $r'_j\in R'$ with $i\neq j$ is at most $4P+2t+1$.
	The same bound applies to the distance between a node on the path connecting $\ell'_i$ to $\ell_i$ and a node on one of the $P$-path going out of $r_j$ (to nodes in $\{r'_j\}\cup F'\cup T'$),
	and to the distance between a node on the path connecting $r'_j$ to $r_j$, and a node on one of the $P$-path going out of $\ell_i$ (to nodes in $\{\ell'_i\}\cup F\cup T$).
\end{claim}

\begin{proof}
	Each node in $L'$ is at distance $3P$ from $\ell_k$, and nodes on paths from $L'$ to $L$ are closer to $\ell_k$.
	The same holds for nodes in $R'$ or in the paths from $R'$ to $R$, with $r_k$.
	This proves the first claim.
	
	For the second claim, consider two nodes $\ell_i$ and $r_j$, $i\neq j$,
	and an index $h$ such that $i[h]\neq j[h]$.
	Assume without loss of generality that $i[h]=1$ while $j[h]=0$.
	There is a simple path connecting these nodes, through
	$\ell_i, t_h, f'_h, r_j$,
	and its length is $2P+2t+1$.
	The other nodes discussed are at distance at most $P$ from $\ell_i$ or from $r_j$, 
	and the claim follows using a similar path.
\end{proof}

We now introduce the link with communication complexity, in order to use Theorem~\ref{thm:non-det-disj}.
Assume Alice and Bob want to solve the \disj{} problem
for two $k$-bit strings $S_A$ and $S_B$,
by using a non-deterministic protocol.
They build the graph described above,
and add the following edges:
$(\ell_i,\ell_{k})$ whenever $S_A[i]=0$, and
$(r_i,r_{k})$ whenever $S_B[i]=0$. (Note that these are simply edges, not paths.)
The next claim is at the heart of our proof.

\begin{claim}
	\label{claim: lb graph disj vs diam}
	If $S_A$ and $S_B$ are disjoint then $D= 4P+2t+2$,
	and otherwise $D\geq 6P+2t+1$.
\end{claim}

\begin{proof}
	First, note that adding edges can only shorten distances, so claims~\ref{claim: lb graph distances between central nodes} and~\ref{claim: lb graph L R different index} hold after adding the input-dependent edges.	
	These lemmas show that the distances between almost all pairs of nodes is at most $4P+2t+1$, with the exception of pairs of the form $\ell'_i\in L'$ and $r'_i\in R'$, for some $i$, and pairs where one node is on a path connecting $\ell'_i$ and $\ell_i$, and the other on a path connecting $r'_i$ and $r_i$. 
	Hence, we need to focus only on pairs of nodes of the form $(\ell'_i,r'_i)$.	
	
	If the inputs are disjoint then for every $0\leq i\leq k-1$,
	either $S_A[i]=0$ or $S_B[i]=0$, so at least one of the edges $(\ell_i,\ell_{k})$ and $(r_i, r_{k})$ exists. Thus, there is a $(4P+2t+2)$-path between $\ell'_i$ and $r'_i$,
	through $\ell_{k}$ or through $r_{k}$, e.g. $\ell_i',\ell_i,\ell_{k},r_{k},t'_h,r_i,r_i'$ for the case $S_A[i]=0$.
	The other distances in the graph are not larger,
	and the distance between $\ell'_i$ and $r'_i$ is exactly $4P+2t+2$
	whenever $S_A[i]\neq0$ or $S_B[i]\neq0$
	\footnote{In the case where both inputs are the 0 strings, the diameter is only $4P+2t+1$;
	we can exclude this unique case by forbidding this input to the \disj{} problem without changing its asymptotic complexity.}.
	The other nodes discussed are on this path, so their distances are smaller.
	Hence, the diameter in this case is exactly $4P+2t+2$.
	
	On the other hand, if the sets are not disjoint,
	there is an index $i$ such that $S_A[i]=S_B[i]=1$.
	Hence, there is no edge connecting $\ell'_i$ and $\ell_k$, and no edge connecting $r'_i$ and $r_k$. 
	We can split the nodes of $F\cup T$ to two sets: nodes that correspond to the binary representation of $i$, and have distance $2P$ from $\ell'_i$; and the other nodes, whose distances from $\ell'_i$ are $4P$.
	A similar split of $F'\cup T'$ applies, regarding the distance from $r'_i$. 
	The graph is specifically crafted so that the nodes close to $\ell'_i$ and the nodes close to $r'_i$ are far apart from one another. 
	More precisely, if $\ell_i'$ is at distance $2P$ from a node $a$ in $F\cup T$, and $r_i'$ is at distance $2 P$ from a node $b$ in $F'\cup T'$, then $a$ and $b$ are not linked by a path on $2t+1$ nodes. Then a shortest path from $\ell_i'$ to $r_i'$ should use four $P$-paths, either on the left part of the graph, or on the right part; and then two additonal $P$-paths on the other part.  
	Hence, any path connecting $\ell'_i$ and $r'_i$ has length at least $6P+2t+1$, and the claim follows.
	\end{proof}

With this claim in hand, we are ready to prove
Lemma~\ref{lemma: pls diam cannot scale better than linear}.

\begin{proof}[Proof of Lemma~\ref{lemma: pls diam cannot scale better than linear}.]
	Fix $t\in[1,n/\log n]$,
	and let $S_A$ and $S_B$ be two input strings for the \disj{} problem
	on $k$ bits.
	We show how Alice and Bob can solve \disj{} on $S_A$ and $S_B$
	in a nondeterministic manner,
	using the graph described above
	and a $t$-PLS for $\mdiam= 4P+2t+2$.

\begin{figure}[t]
	\begin{center}
		\includegraphics[scale=1,
		trim=3cm 21.5cm 5.5cm 2.5cm,clip]{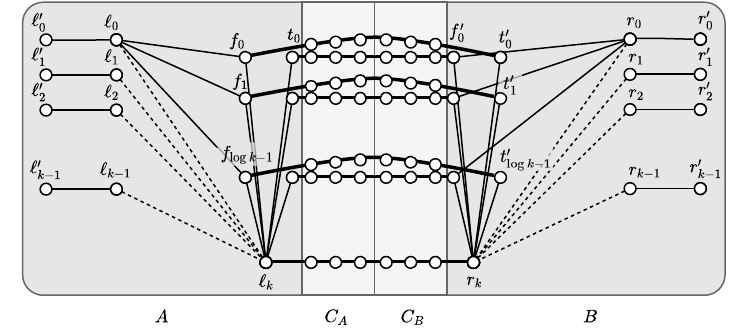}
		\caption{
			The lower bound graph construction for $t=3$,
			and the partition of $V$ into $A,C_A,C_B$ and $B$.}
		\label{fig: diameter lb simulation graph}
	\end{center}
\end{figure}
	Let us first define a partition of the nodes of the graph (see Figure~\ref{fig: diameter lb simulation graph}).
	Let $A$ be the set of nodes that are either in $L\cup L' \cup T \cup F \cup\set{\ell_{k}}$ or in a $P$-path between two nodes of this union.
	Similarly let $B$ be the set of nodes that are either in $R\cup R' \cup T' \cup F' \cup\set{r_{k}}$  or in a $P$-path between two nodes of this union. 
	For each pair of nodes $(a,b)\in A\times B$ that are connected by a $(2t+1)$-path,
	let $P_{ab}$ be this path,
	and $\set{P_{ab}(i)}$, $i=0,\ldots, 2t+1$ be its nodes in consecutive order,
	where $P_{ab}(0)=a$ and $P_{ab}(2t+1)=b$.
	Let $C$ be the set of nodes on such paths, at the exception of the endpoints. 
	The set $C$ is partitioned into $C_A$ and $C_B$, where $C_A$ is the set of nodes $\set{P_{ab}(i)}$, $i=1,\ldots, t$, and $C_B$ is the set of nodes $\set{P_{ab}(i)}$, $i=t+1,\ldots, 2t$. 
	Note that $A\cup B\cup C_A \cup C_B$ forms a partition of the nodes of the graph.
	
	Alice interprets her nondeterministic string as the certificates given
	to the nodes in $A\cup C$,
	and sends the certificates of $C$ to Bob.
	Bob interprets his nondeterministic string as the certificates of $B$,
	and gets the certificates of $C$ from Alice.
	Alice can simulate the verifier on the nodes of $A\cup C_A$, as follows. 
	Consider a node $v$ of $A\cup C_A$. Our construction assures that the ball of radius $t$ around $v$ is contained in $A\cup C$. Hence, Alice knows the topology of this ball --- the only edges in the graph she is not aware of are between $r_k$ and nodes in $R$. In addition, Alice knows all the certificates of the nodes in this ball.
	
	Thus, Alice can simulate the verifier on every node $v$ of $A\cup C_A$. Similarly, Bob can simulate the verifier on every node of $B\cup C_B$, because he also knows the full topology of the $t$-balls around these nodes, and the certificates of the ball's nodes.
	
	Using this simulation,
	Alice and Bob can determine whether \disj\ on $(S_A,S_B)$ is true:
	from Claim~\ref{claim: lb graph disj vs diam},
	we know that if it is true then $\mdiam= 4P+2t+2$,
	and the verifier of the PLS accepts,
	while otherwise it rejects.
	The nondeterministic communication complexity of the true case of
	\disj{} on $k$-bit strings is $\Omega (k)$ by Theorem~\ref{thm:non-det-disj},
	so Alice and Bob must communicate this many bits.
	In our construction, $k$ is in $\Theta(n)$, thus Alice and Bob have to communicate $\Omega(n)$ bits.
	From the graph definition,
	$|C|=\Theta(t\log n)$
	which implies
	$\pls(\mbox{\sc diam},t)=\Omega\left(\frac{n}{t\log n}\right)$,
	as desired.
\end{proof}

\remove{---
Lemma~\ref{lemma: pls diam cannot scale better than linear} says that a PLS for $\mdiam$ cannot scale better than linearly on the lower bound graph. However, this does not completely settles the question we pose [[[AMI: did we?]]],
as to linear scaling on a graph with high expansion [[[verify the terminology]]].
To prove that linear scaling is the best possible, we introduce several simple changes to the lower bound graph. These will not change the distances computed in the above claims, but will make sure the neighborhoods of the nodes grow much faster than linear.

Our starting point is the graph defined above, with $P=1$, without the nodes $\ell_k$ and $r_k$, and without the edges that depend on the inputs.
Split this graph nodes into layers, by distance from the set $L'$. 
That is, $L'$ is layer $0$, 
$L$ is layer $1$, 
$F\cup T$ is layer $2$, 
the nodes on the paths from $F\cup T$ to $F'\cup T'$ constitute layers $3$ to $2t+2$, 
$F'\cup T'$ is layer $2t+3$,
$R$ is $2t+4$,
and $R'$ is $2t+5$.
Note that this is consistent with Claim~\ref{claim: lb graph L R different index}, as any two node $\ell'_i$ and $r_j$, with $i\neq j$, have a $(2t+5)$-path that connects them and traverses the layers one by one---and this is exactly the path described in the claim's proof.

For each layer, add to the graph a clique on $k$ new nodes, and connect them to all the nodes of this layer. One ot the nodes of layer $F\cup T$ is marked $\ell_k$ (and is indeed connected to the original neighbors of $\ell_k$), and one of the neighbors of $F'\cup T'$ is marked $r_k$.
Alice and Bob now add edges to the graph as before. They simulate the algorithm
nooooooooooo

We add a clique of size x to each layer of $C$ .

Consider the above graph, with $P=1$, with the following additions (see Figure~...).

For each of the nodes sets $L',L,R$ and $R'$, connect all the nodes in the set by edges, to form four $K$-cliques.
For any of the set $F,T,F'$ and $T'$, add $k-\log k$ new nodes, and connect all the set nodes and all the new nodes, to create another four $k$-cliques. Do the same for the nodes on the $(2t+1)$-paths: In the notations used in the proof of Lemma~\ref{lemma: pls diam cannot scale better than linear}, for each $i$, $0<i<2t+1$, add 

=========================\\
For each pair of nodes $(a,b)\in A\times B$ that are connected by a $(2t+1)$-path,
let $P_{ab}$ be this path,
and $\set{P_{ab}(i)}$, $i=0,\ldots, 2t+1$ be its nodes in consecutive order,
where $P_{ab}(0)=a$ and $P_{ab}(2t+1)=b$.
Let $C$ be the set of all $(2t+1)$-path nodes, i.e.~$C=V\setminus(A\cup B)$.
---}


Let $\alpha, \beta$ be two non-negative integers, that may depend on $n$. 
For any labeled graph~$(G,x)$, $\mbox{\sc $\beta$-additive spanner}$ is the predicate on $(G,x)$ that states whether, for every $v\in V(G)$,  $x(v)\subseteq \{\id(w), w\in N(v)\}$ for every $v\in V(G)$, and whether the collection of edges $E_H=\{\{v,w\}, v\in V(G), w\in x(v)\}$ forms a  $\beta$-additive spanner of $G$, i.e., a subgraph $H$ of $G$ such that, for every two nodes $s,t$, we have $\dist_H(s,t)\leq \dist_G(s,t)+\beta$. Similarly, the predicate $\mbox{\sc $(\alpha,\beta)$-spanner}$ on $(G,x)$ states whether $H$ is such that, for every two nodes $s,t$, $\dist_H(s,t)\leq \alpha \, \dist_G(s,t)+\beta$.
There is a proof-labeling scheme for $\mbox{\sc $(\alpha,\beta)$-spanner}$ that weakly scales linearly.
For the spacial case of $\alpha=1$,
i.e., the predicate $\mbox{\sc $\beta$-additive spanner}$,
this is also optimal if $\beta$ is constant.

\begin{theorem}
	\label{thm: spanners scaling}
	For every pair of integers $\alpha\geq1$ and $\beta\geq0$, there is a one-round proof-labeling scheme for $\mbox{\sc $(\alpha,\beta)$-spanner}$ of size $O(n)$,

	and a $t$-round proof-labeling scheme of size $O(\frac{n}{t}\log^2 n)$.
	Moreover, for a constant $\beta$ we also have
	$\pls(\mbox{\sc $\beta$-additive spanner},t)=\Omega(\frac{n}{t})$.

\end{theorem}


\begin{proof}
	
	The upper bound proof is similar to the one of Lemma~\ref{lemma: pls diam scales linearly} for diameter. Specifically, instead of using a unique table $D_v$ storing distances in the graph, every node~$v$ stores two tables $D_v$ and $\widehat{D}_v$, where $\widehat{D}_v$ stores the distances in the spanner. To scale, each node keeps only a fraction $\frac{c\log n}{t}$ of the entries in each table, for some constant $c>0$. For $c$ large enough, this is sufficient for every node to recover its distance to every other node in both the graph and the spanner, by using the same arguments as in the proof of Lemma~\ref{lemma: pls diam scales linearly}. The verification again proceeds as in the proof of Lemma~\ref{lemma: pls diam scales linearly}, excepted that, instead of checking whether $D_v(u)\leq x(v)$ for every node~$u$, every node~$v$ checks that, for every node~$u$, $\widehat{D}_v(u)\leq \alpha D_v(u) +\beta$ or $\widehat{D}_v(u)\leq D_v(u) + \beta$, depending on whether one is dealing with general spanners or additive spanners, respectively. Correctness directly follows from the same arguments as in the proof of Lemma~\ref{lemma: pls diam scales linearly}.

	To achieve the lower bound,
	we rely on the same construction as in the proof of
	Lemma~\ref{lemma: pls diam cannot scale better than linear}.
	The base graph, for which a spanner is constructed, is the graph described above for the case where both $S_A$ and $S_B$ are the all-$0$ strings, i.e.\ all the potential edges are present in the graph.
	To decide \disj{}, Alice and Bob build a spanner for this base graph in the same manner as the graph is constructed in the proof of Lemma~\ref{lemma: pls diam cannot scale better than linear}, i.e., the spanner contains all the nodes and paths, and among the input-dependent edges, the ones that  correspond to zeros in the input strings.
	This construction guarantees that each graph edge that is not in the spanner
	has a $2P$-path between its endpoint that is in the spanner.
	If the inputs are disjoint,
	then at most one edge in each shortest path is not in the spanner,
	so the stretch is at most $2P-1$.
	On the other hand, if the inputs are not disjoint then
	there is a pair $(\ell'_i,r'_i)$ that stretches from
	$4P+2t+1$ in the original graph
	to $6P+2t+1$ in the spanner.
	
	We pick $P$ such that the spanner is a legal $(\alpha,\beta)$-spanner
	if and only if the inputs are disjoint,
	that is, $4P+2t+2 \leq \alpha(4P+2t+1) +\beta$,
	while $6P+2t+1 > \alpha(4P+2t+1) +\beta$.
	For additive spanners, i.e.\ $\alpha=1$, the first condition always holds;
	to guarantee the second, we choose $P>\beta/2$.
	The lower bound is thus $\Omega\left(\frac{n}{\beta t}\right)$, or $\Omega\left(\frac{n}{t}\right)$ for a constant $\beta$.
\end{proof}

\section{Distributed Proofs for Spanning Trees}
\label{sec: spanning trees}
In this section, we study two specific problems which are classical in the domain of
proof-labeling schemes: the verification of a spanning tree, and of a minimum-weight spanning tree.
The predicates $\ST$ and $\MST$ are the sets of labeled graphs where some edges are marked and these edges form a spanning tree, and a minimum spanning tree, respectively. For these predicates, we present proof-labeling schemes that scale linearly in $t$.
Note that $\ST$ and $\MST$  are problems on general labeled graphs and not on trees, i.e., the results in this section improve upon Section~\ref{sec:optimal-uniform} (for these specific problems),
and are incomparable with the results of Section~\ref{sec:scaling-on-trees}.

Formally, let $\cF$ be the family of all connected undirected, weighted, labeled graphs $(G,x)$.
Each label $x(v)$ contains a (possibly empty) subset of edges adjacent to $v$,
which is consistent with the neighbors of $v$, and we denote the collection of edges represented in $x$ by $T_x$.
In the $\ST$ (respectively, $\MST$) problem, the goal is to decide for every labeled graph $(G,x)\in \cF$ whether $T_x$ is a spanning tree of $G$ (respectively, whether $T_x$ is a spanning tree of $G$  with the sum of all its edge-weights minimal among all spanning trees of $G$).
For these problems we have the following results.

\begin{theorem}\label{thm:ST}
	For every $t\in O(\log n)$, we have that $\pls(\ST,t) = O\left(\frac{\log n}{t}\right)$.
\end{theorem}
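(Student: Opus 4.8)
The plan is to reduce the verification radius for $\ST$ by exploiting the classical logarithmic-size proof-labeling scheme for spanning trees and then compressing it using a spreading argument analogous to Lemma~\ref{lem:spreading}. Recall the standard $1$-PLS for $\ST$: the prover roots the tree $T_x$ at some node $r$, and gives every node $v$ a certificate consisting of the identity of $r$ and the distance $d(v)$ from $v$ to $r$ in $T_x$; the verifier checks that exactly one node claims $d(v)=0$ and calls itself the root, that every non-root node has a parent (a $T_x$-neighbor with distance $d(v)-1$), and that $T_x$ is spanning and acyclic (acyclicity follows since the distances strictly decrease towards the root along $T_x$-edges, and connectivity of $T_x$ is forced by each node having a path of decreasing labels to the root). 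This uses $O(\log n)$ bits. The key observation is that the only globally-shared part is the root identity ($O(\log n)$ bits, the same at every node), while the distance field is a ``local'' quantity that decreases by exactly one along tree edges.

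First I would handle the shared root identity. Since $t = O(\log n)$, I would partition the $O(\log n)$-bit root identity into $\Theta(t)$ pieces of $O(\frac{\log n}{t})$ bits each and spread them along a path in $T_x$. Concretely, fix a path of length $t$ in $T_x$ hanging below the root (one exists unless $T_x$ has depth $< t$, in which case every node sees the whole tree in $t$ rounds and verification is trivial, costing one extra bit to flag); place piece $i$ of the root ID at the $i$-th node of this path. In $t$ rounds every node within distance $t$ of this path can reassemble the root ID; but not every node is that close. To fix this globally, I would instead use the decomposition idea of Section~\ref{sec:scaling-on-trees} applied to $T_x$ itself: mark border/extra-border nodes at distances multiples of $h=\lfloor t/2\rfloor$ from the root along $T_x$, spread the root ID (and the distance-modulo-small-number data needed to locally recover $d(v)$) across each domain, and have border nodes simulate the original $\ST$-verifier on their domains exactly as in Lemmas~\ref{lem:labels-on-marked-nodes} and~\ref{lem:spreading}. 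The decomposition and its verification cost $O(\frac{\log n}{t})$ bits (the modulo-3 orientation counter and the $2$-bit type field, the latter of which I would also need to compress, but it is $O(1)$ bits so spreading it is unnecessary as long as it fits—actually the type field is $O(1)$ so it is fine directly).

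The subtle point, and the main obstacle, is that the distance field $d(v)$ is genuinely node-dependent and can be as large as $n-1$, so it is $\Theta(\log n)$ bits and is \emph{not} shared — Lemma~\ref{lem:spreading} only spreads certificates of \emph{special} nodes, it does not help with the $\Theta(\log n)$-bit labels of the standard nodes. The resolution is that, given the tree decomposition, a node does not need its full distance to the root: within a domain of depth $O(t)$, it suffices to know the distance from the domain's border node to the root (which the border node can certify, and which differs from the global root-distance known at adjacent domains by a controlled amount, checkable in $t$ rounds by looking into the neighbouring domain) plus its own $O(\log t)$-bit depth within the domain, which it can compute locally. So I would replace the $\Theta(\log n)$-bit distance label by: an $O(\log t) = O(\log\log n)$-bit local depth (stored explicitly, negligible), and the border-to-root distance is inferred by spreading it across the domain just like the root ID. Consistency of these border-to-root distances between adjacent domains is checked within the $t$-round view since adjacent domains meet at special nodes. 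With these ingredients the per-node certificate is $O(\frac{\log n}{t}) + O(\log\log n) + O(1)$ bits; absorbing the lower-order terms (valid since $t = O(\log n)$ forces $\frac{\log n}{t} = \Omega(1)$, though $\log\log n$ could dominate $\frac{\log n}{t}$ when $t$ is close to $\log n$ — in that regime one must be slightly more careful and observe that when $t \geq \log\log n$ one can afford to also spread the local depth, or simply note the bound $O(\frac{\log n}{t})$ should be read as $O(\frac{\log n}{t})$ with the understanding that sub-$\log$ additive slack is folded in; the cleanest statement is to spread \emph{all} of the $O(\log n)$ certificate bits of the original scheme across the domain via Lemma~\ref{lem:spreading}, treating the whole original certificate as if it were a special-node certificate replicated along the domain path, which gives exactly $O(\frac{\log n}{t})$). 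I expect verifying that the spread-and-reassemble step faithfully simulates the original $\ST$-verifier — in particular that acyclicity and the spanning property are preserved when distances are only locally reconstructed — to be where the real care is needed, but it follows the template already established for trees in Section~\ref{sec:scaling-on-trees}.
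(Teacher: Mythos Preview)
There is a real gap. The machinery of Section~\ref{sec:scaling-on-trees} (Lemmas~\ref{lem:labels-on-marked-nodes} and~\ref{lem:spreading}) is built for predicates on labeled \emph{trees}, whereas $\ST$ is a predicate on general labeled graphs, and the place the template breaks is the connectivity check. In the standard $1$-PLS for $\ST$, what forces $T_x$ to be connected (rather than a spanning forest) is that the root ID is compared across \emph{every} $G$-edge, including non-tree edges. When you decompose $T_x$ into domains and let a border node $v$ simulate the original verifier on its domain, a standard node $u$ in $v$'s domain may have a $G$-neighbor $w$, via a non-tree edge, that is a standard node in another domain --- possibly even in another component of $T_x$. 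The property that makes Lemma~\ref{lem:labels-on-marked-nodes} sound on trees is exactly Lemma~\ref{lem:marking}(3): neighbours in different domains are both special, so their certificates are fixed. That fails for non-tree $G$-edges. If $v$ is allowed to also guess $w$'s certificate, then a malicious prover can decompose each component of a spanning forest separately, assign each its own root ID, and every border node will find a locally consistent guess --- the forest is accepted. If $v$ does \emph{not} guess $w$'s certificate, then even a legitimate spanning tree is rejected whenever a non-tree edge joins two standard nodes of different domains. Either way the scheme is unsound.

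The paper avoids this by decoupling the three ingredients. Spanning is local. Acyclicity is taken as a black box from the $O(\log n/t)$-bit scheme of Ostrovsky et al., which after verification leaves each component oriented with an identified root; your distance-spreading idea is essentially a reinvention of this part and would indeed work, since the acyclicity verifier only inspects $T_x$-edges. Connectivity is then reduced to ``there is exactly one root'', certified by spreading a single root ID as a \emph{uniform} certificate via Theorem~\ref{theo:universal}: because that spreading lives in $G$ rather than in $T_x$, reconstruction and consistency are enforced $G$-wide, and every node that believes it is a root simply checks its own ID against the reconstructed one. Your argument is repaired by keeping the tree-based compression for the acyclicity/distance part but handling the root ID through Theorem~\ref{theo:universal} (or any mechanism that forces agreement across all of $G$), not along $T_x$.
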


Ostrovsky et al.~\cite[Theorem 8]{OstrovskyPR17} designed a radius-$t$ proof-labeling scheme for acyclicity,
with $(\ceil*{\log n/t})$-bit certificates, for $t\leq D$.
Here, $D$ is the diameter of the graph, which is at least the largest depth of a tree in it.
In the scheme, each tree is oriented outwards from an arbitrarily chosen root, and
after running the verification process, each node knows who is his parent in its tree,
and the root of each tree knows it is the root.

For completeness we describe here in a high level this $t$-PLS for acyclicity.
The following scheme can be used to verify that a graph contains no cycles using certificates of size $O(\log n)$ in a single round. The certificate of a node $v$ consists of an integer $d(v)$ which encodes the distance from $v$ to a root (which has $d(v) = 0$). Nodes verify the correctness of the certificates in a single communication round. If $v$ satisfies $d(v) = 0$ (i.e., $v$ is a root), then it accepts the certificate if all of its neighbors $w$ satisfy $d(w) = 1$. If $v$ satisfies $d(v) \not= 0$ then $v$ verifies that $v$ has exactly one neighbor $u$ with $d(u) = d(v) - 1$ while all other neighbors $w$ satisfy $d(w) = b(v) + 1$. This scheme is used, for example, in~\cite{APV91,IL94,AO94}.
To achieve certificates of size $O((\log n) / t)$ for acyclicity, Ostrovsky et al. simulate the $1$-PLS  described above. The idea is to take only some ``special'' nodes, break their $O(\log n)$-bit  certificates  indicating the distances into shares of size $O((\log n)/t)$, and spread the shares on paths of length $\Theta(t)$ down the tree (similarly to the way we spread out the certificates of the border and extra-border nodes in the proof of Lemma~\ref{lem:spreading}).
In the verification process, the ``special'' nodes recover their distances to the root by collecting the relevant shares from their $t$-neighborhood, and all other nodes recover their distances using the fact that a node with distance $d_1$ (down the tree) to a node with distance $d_2$ to the root, must be with distance $d_1+d_2$ to the root. Then, nodes can make the verification exactly as in the $1$-PLS described above.
This scheme plays an essential role in the proof of Theorem~\ref{thm:ST}.

\begin{proof}[Proof of Theorem~\ref{thm:ST}]
	To prove that a marked subgraph $T_x$ is a spanning tree,
	we need to verify it has the following properties: (1) spanning the graph, (2) acyclic, (3) connected.
	We choose an arbitrary node as the root of $T_x$.
	
	The certificate of a node $v$ is composed of three parts:
	a bit indicating whether the tree is shallow as in the proof of Theorem~\ref{thm:tree-scaling},
	$O(\log n/t)$-bits by the scheme of Ostrovsky et al.~\cite[Theorem 8]{OstrovskyPR17} for acyclicity,
	and $O(\log n/t)$-bits that are a part of the \ID{} of the root, as in the proof of Theorem~\ref{thm:tree-scaling}.
	
	In the verification process,
	the nodes first verify they all have the same first bit. If the tree is shallow, all nodes but the root accept,
	and the root collects the whole structure of the graph and of $T_x$ and verifies it is a spanning tree.
	Otherwise, each node verifies that at least one of its edges is marked to be in $T_x$, making sure $T_x$ is spanning all the graph.
	The nodes then run the verification process from~\cite[Theorem 8]{OstrovskyPR17}, while ignoring edges not in $T_x$, to make sure the graph is acyclic.
	Finally, they all run the reconstruction process from Theorem~\ref{theo:universal} to find the root \ID{}, and the root of $T_x$ (as defined by the acyclicity scheme) verifies this root \ID{} is indeed its own \ID{}. This guarantees $T_x$ is a connected forest, i.e. a tree, as desired.
\end{proof}

\begin{theorem}\label{thm:MST}
	For every $t\in O(\log n)$, we have that $\pls(\MST,t) = O\left(\frac{\log^2 n}{t}\right)$.
\end{theorem}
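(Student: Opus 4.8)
The plan is to mimic the proof of Theorem~\ref{thm:ST}, replacing its three ingredients by their minimum-spanning-tree analogues. Recall that the classical $1$-PLS for $\MST$ of size $O(\log^2 n)$ certifies that $T_x$ is a spanning tree (via the $\ST$ scheme) together with a proof that $T_x$ is of minimum weight; the latter is typically done by exhibiting, for each non-tree edge $e$, a certificate that $e$ is a maximum-weight edge on the cycle it closes in $T_x$ (the cycle property), which is encoded by orienting $T_x$ from a root and storing, along the tree path, the relevant weights. The key observation is that all of this information is organized along the tree $T_x$, so the spreading technique of Lemma~\ref{lem:spreading} and the decremental/ball-collection idea underlying Theorem~\ref{thm:tree-scaling} and Theorem~\ref{thm:ST} apply.

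Concretely, I would proceed as follows. First, handle the shallow case exactly as in Theorem~\ref{thm:ST}: one bit of the certificate indicates whether $T_x$ has depth at most $t$; if so, the root gathers the whole graph and $T_x$ and checks $\MST$ directly, contributing $0$ further bits. Second, use Theorem~\ref{thm:ST} to certify, with $O(\log n/t)$ bits per node, that $T_x$ is a spanning tree, together with the orientation of $T_x$ towards a unique root produced by the acyclicity scheme of Ostrovsky et al.~\cite[Theorem 8]{OstrovskyPR17}; this gives every node its parent pointer and depth. Third, certify minimality. Here the natural object is the $1$-PLS for $\MST$ from~\cite{KormanK07,KormanKP10}: reuse its certificates but \emph{spread} them along $T_x$ as in Lemma~\ref{lem:spreading}, so that the $O(\log^2 n)$-bit per-node certificate is cut into $\Theta(t)$ pieces of $O(\log^2 n/t)$ bits each, distributed along a downward path of length $t-1$ starting at each node (such a path exists since we are in the deep case, after a decomposition into domains/paths as in Section~\ref{sec:scaling-on-trees}). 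In the verification, every node collects its $t$-ball, reconstructs the full $\MST$-certificates of the nodes in its domain, and runs the original radius-$1$ verifier of the $\MST$ scheme on them.

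The main obstacle is that the verification of the original $\MST$ scheme is \emph{not} purely local along $T_x$: checking the cycle property for a non-tree edge $e=\{u,v\}$ requires comparing $w(e)$ against weights stored along the entire tree path between $u$ and $v$, which can be as long as the diameter and therefore far longer than $t$. So one cannot simply simulate the radius-$1$ verifier inside a radius-$t$ ball. The fix, which I expect to be the crux of the argument, is to replace the path-based proof of minimality by a \emph{distance-labeling-style} proof along the lines of Lemma~\ref{lemma: pls diam scales linearly}: for each node $v$ store, for each other node, the maximum edge weight on the tree path from $v$ to it (equivalently, a labeling that lets any two nodes recover the max-weight edge on their tree path from labels of size $O(\log^2 n)$), and then certify this labeling by purely local consistency checks between a node and its parent. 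This labeling is again organized so that a node can verify its own entries from those of its neighbours in $T_x$, hence it can be sparsified and spread: keep each node's entry for a given target with probability $\Theta(\log n/t)$, so that on every tree path some node within distance $t$ retains it, exactly as in the diameter proof. A node then reconstructs the needed max-weight-path values from its $t$-ball and checks $w(e) \ge$ that value for each incident non-tree edge $e$; by the cycle property this certifies $\MST$. Combining the three parts gives certificates of size $O(\log^2 n/t)$, and the verification radius is $O(t)$, as required.

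I would finish by noting the range restriction $t \in O(\log n)$ enters through the acyclicity subroutine of~\cite{OstrovskyPR17} (which gives $O(\log n/t)$ only for $t \le \log n$) and through the fact that for larger $t$ the $\MST$-certificate of $O(\log^2 n)$ bits has already been shrunk to $O(\log n)$, below the granularity we can usefully split; within this range all constants work out and the scheme is deterministic, the probabilistic argument serving only to establish the existence of a good sparsification as in Theorems~\ref{theo:universal} and~\ref{thm:tree-scaling}.
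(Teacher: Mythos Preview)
Your proposal contains a genuine gap in the third step, the certification of minimality. You correctly identify that simply spreading the $O(\log^2 n)$-bit certificates of the classical $\MST$ scheme along $T_x$ fails, because the verifier must handle non-tree edges whose endpoints may lie in unrelated parts of the tree decomposition. But your proposed fix does not yield $O(\log^2 n/t)$-bit certificates. If by ``for each node $v$ store, for each other node, the maximum edge weight on the tree path'' you mean a full table with $n$ entries, then sparsifying \`a la Lemma~\ref{lemma: pls diam scales linearly} leaves $\widetilde{O}(n/t)$ bits per node, not $\widetilde{O}(\log^2 n/t)$. If instead you mean a compact bottleneck-path labeling of size $O(\log^2 n)$ (e.g., via heavy-path decomposition), then there is no notion of ``entry for a given target'' to drop independently: decoding the max-weight on the $u$--$v$ path requires the \emph{entire} labels of both $u$ and $v$, so the diameter-style random sparsification does not apply. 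Moreover, you also need to \emph{certify} that the labeling is correct, and it is not clear how to do that with local checks of size $o(\log^2 n)$ after sparsification.

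The paper takes a different route that sidesteps this obstacle entirely. Rather than certifying the cycle property, it certifies consistency with the $O(\log n)$ phases of the GHS algorithm. In each phase, every fragment $F$ must verify a single piece of $O(\log n)$-bit \emph{uniform} information (the identity of the root of $F$ and the weight of its minimum outgoing edge); since this information is identical across all nodes of $F$, Theorem~\ref{theo:universal} compresses it to $O(\log n/t)$ bits per node. Summing over the $O(\log n)$ phases gives $O(\log^2 n/t)$. The crucial point you are missing is that the GHS layering turns the minimality proof into $O(\log n)$ independent \emph{uniform} sub-certificates, each of which scales cleanly, whereas your max-on-path labeling is inherently non-uniform across nodes and does not decompose this way.
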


Our theorem only applies for $t\in O(\log n)$, meaning that we can get from proofs of size $O(\log^2 n)$ to proofs of size $O(\log n)$, but not to a constant.
For the specific case $t=\Theta(\log n)$, our upper bound matches the lower bound of Korman et al.~\cite[Corollary 3]{KormanKM15}.
In the same paper, the authors also present an $O(\log^2n)$-round verification scheme for $\MST$ using $O(\log n)$ bits of memory at each node (both for certificates and for local computation). Removing the restriction of $O(\log n)$-bit memory for local computation, one may derive an $O(\log n)$-round verification scheme with $O(\log n)$ proof size out of the aforementioned $O(\log^2n)$-round scheme, which matches our result for $t=\Theta(\log n)$. The improvement we present is two-fold: our scheme is scalable for different values of $t$ (as opposed to schemes for only $t=1$ and $t=\Theta(\log n)$), and our construction is much simpler, as described next.

Our upper bound is based on a classic $1$-round PLS for MST~\cite{KormanKM15,KormanK07},
which in turn builds upon the algorithm of Gallager, Humblet, and Spira (GHS)~\cite{gallager83mst} for a distributed construction of an MST.
The idea behind this scheme is, given a labeled graph $(G,x)$,  to verify that $T_x$ is consistent with an
execution of the GHS algorithm in $G$. 

The GHS algorithm maintains a spanning forest that is a subgraph of the minimum spanning tree,
i.e., the trees of the forest are fragments of the desired minimum spanning tree.
The algorithm starts with a spanning forest consisting of all nodes and no edges.
At each phase each of the fragments
adds the minimum-weight edge going out of it, thus merging several fragments into one.
After $O(\log n)$ iterations, all the fragments are merged into a single component,
which is the desired minimum-weight spanning tree.
We show that each phase can be verified with $O(\log n / t)$ bits, giving a total complexity of $O(\log^2 n / t)$ bits.

We note that the GHS algorithm requires either distinct edge weights or a fixed ordering of the edges in order to break ties between edges with the same weight in a consistent way. 
In addition, given a labeled graph $(G,x)$, any MST $T$ can be a result of the GHS algorithm on $(G,x)$ for some fixed ordering of the edges: any order that prefers the edges of $T$ over the other edges will result in $T$ as an output.

\begin{proof}
	Let $(G,x)$ be a labeled graph such that $T_x$ is a minimum-weight spanning tree,
	and fix an ordering of the edges such that an execution of the GHS algorithm on $(G,x)$ yields the tree $T_x$. Let $\POS(e)$ be the position of $e$ in this ordering.
	If $t$ is greater than the diameter $D$ of $G$, every node can see the entire labeled graph in the verification process, and we are done;
	we henceforth assume $t\le D$.
	The certificates consist of four parts.
	
	First, we choose a root and orient the edges of $T_x$ towards it.
	We give each node its distance from the root modulo $3$,
	which allows it to obtain the \ID{} of its parent and the edge pointing to it in one round.
	Second, we assign the  certificate described above for $\ST$ (Theorem~\ref{thm:ST}),
	which certifies that $T_x$ is indeed a spanning tree. This requires $O(\log n/t)$~bits.
	
	The third part of the certificate tells each node 
	the position of the edge connecting it to its parent in the fixed ordering, the phase in which the edge is added to the tree in the execution of the GHS algorithm,
	and which of the edge's endpoints added it to the tree.
	Note that after one round of verification, each node knows for every incident edge, its position in the ordering, at which phase it is added to the spanning tree,  and by which of its endpoints.
	This part uses $O(\log n)$ bits.

	The fourth part of the certificate consists of  $O(\log^2 n/t)$ bits,  $O(\log n/t)$ for each of the $O(\log n)$ phases of the GHS algorithm.
	To define the part of a certificate of every phase, fix a phase, a fragment $F$ in the beginning of this phase, and 
	let $e=(u,v)$ be the edge added to the spanning tree by fragment $F$ in this phase, where $u\in F$ and $v\notin F$.
	Our goal is that the nodes of $F$ verify together that $e$ is a minimum-weight outgoing edge with the smallest position in the ordering, and that no other edge was added by~$F$ in this phase.
	To this end, we first
	orient the edges of $F$ towards $u$, i.e., set $u$ as the root of~$F$.
	If the depth of $F$ is less than $t$, then in $t-1$ rounds the root $u$ can see all of $F$ and check that $(u,v)$ is the desired  minimum-weight outgoing edge. All other nodes just have to verify that no other edge is added by the nodes of $F$ in this phase.
	Otherwise, if the depth of $F$ is at least~$t$, by Theorem~\ref{theo:universal}, the information about $\ID(u)$, $w(e)$, and $\POS(e)$ can be spread on $F$ such that in $t$ rounds it can be collected by all nodes of $F$. With this information known to all the nodes of $F$, the root can locally verify that it is named as the node that adds the edge and that it has the named edge with the right weight and position. The other nodes of $F$ can locally verify that they do not have incident edges with a smaller weight, 
	that every incident edge with the same weight has a greater position, 
	and that no other edge is added by~$F$.
	This part takes $O(\log n/t)$ bits per iteration, which sums to a total of $O(\log^2 n/t)$ bits.
	
	Overall, our scheme verifies that $T_x$ is a spanning tree, and that it is consistent with every phase of some execution of the GHS algorithm. Therefore, the scheme accepts $(G,x)$ if and only if $T_x$ is a minimum spanning tree.
\end{proof}

\section{Open problems}

We finish with a list of open problems. We first present a series of open questions directly related to our setting, and then consider natural alternative settings.

We have proved that, for many classical Boolean predicates on labeled graphs (including MST), there are proof-labeling schemes that  scale linearly with the radius of the scheme, i.e., the number of rounds of the verification procedure. More generally, we have shown that for \emph{every} Boolean predicate on  labeled trees, cycles and grids, there is a proof-labeling scheme that scales linearly with the radius of the scheme. This yields the following question:

\begin{problem}
	Prove or disprove that, for every predicate $\mathcal{P}$ on labeled graphs, there is a proof-labeling scheme for $\mathcal{P}$ that (weakly) scales linearly.
\end{problem}

In fact, the scaling factor might even be larger than~$t$, and be as large as $b(t)$ in graphs with ball growth~$b$. We have proved that the uniform part of any proof-labeling scheme can be scaled by such a factor~$b(t)$ for $t$-PLS. This yields the following, stronger open problem:

\begin{problem}
	Prove or disprove that, for every predicate $\mathcal{P}$ on labeled graphs, there is a proof-labeling scheme for $\mathcal{P}$ that scales with factor $\Omega(b(t)\poly\log n)$ in graphs with ball growth~$b$.
\end{problem}

We are tempted to conjecture that the answer to the first problem is positive (as it holds for trees and cycles). However, we believe that the answer to the second problem might well be negative. In particular, it seems challenging to design a proof-labeling scheme for \textsc{diam} that would scale with the size of the balls. Indeed, checking diameter is strongly related to checking shortest paths in the graph, and this significantly restricts the way the certificates can be redistributed among nodes in a ball of radius~$t$. Yet, there might be some other way to certify \textsc{diam}, so we let the following as an open problem:

\begin{problem}
	Is there a proof-labeling scheme for \textsc{diam} that scales by a factor greater than~$t$ in all graphs where $b(t)\gg t$?
\end{problem}

Another type of open problem is to adapt our result to different settings, and in particular, more practical settings. 
Proof-labeling schemes were originally designed as a component of self-stabilizing algorithms, that are algorithms that not only check the certificates but also build the desired object and its certificates. 
Given that we use the probabilistic method in various places a natural question is:

\begin{problem}
Can we have simple explicit construction of the certificate assignments used in this paper? And further, how can these be efficiently implemented in a distributed setting?
\end{problem} 

Even if we consider only the verification phase, we could try to move to a more practical setting. For example for classic self-stabilizing algorithms, the size of the certificates is a measure of the bandwidth needed for certification, which is no longer the case when aggregating certificates from distance greater than one. Indeed, we use a view of radius $t$, which, if we had to send messages, would require either large messages, or many more that $t$ rounds. Thus, a natural open problem is:

\begin{problem}
Can we make the verification algorithm work efficiently in a message-passing model? More precisely, can we perform the verification in $t$ rounds by allowing the nodes to only exchange messages whose size is the same as the certificate size?
\end{problem}  

\subparagraph{Acknowledgements:}
We thank Seri Khoury and Boaz Patt-Shamir for valuable discussions,
and the anonymous reviewers of DISC 2018 and Distributed Computing journal for their comments.

\bibliographystyle{plain}
\bibliography{tPLSbib}
\end{document}